\newcommand{\id}{\mathbbm{1}}
\newcommand{\I}{\mathrm{i}}
\newcommand{\T}{\top}
\let\Re\relax
\let\Im\relax
\DeclareMathOperator{\Re}{Re}
\DeclareMathOperator{\Im}{Im}
\DeclareMathOperator{\Tr}{Tr}
\DeclareMathOperator{\tr}{tr}
\renewcommand{\vec}{\boldsymbol}
\let\originalleft\left
\let\originalright\right
\renewcommand{\left}{\mathopen{}\mathclose\bgroup\originalleft}
\renewcommand{\right}{\aftergroup\egroup\originalright}
\renewcommand{\right}{\aftergroup\egroup\originalright}
\newcommand{\parti}[2]{\frac{\partial #1}{\partial #2}}
\newcommand{\avg}[1]{\langle#1\rangle}
\newcommand{\Avg}[1]{\left\langle#1\right\rangle}
\newcommand{\abs}[1]{\left|#1\right|}
\newcommand{\bk}[1]{\left(#1\right)}
\newcommand{\Bk}[1]{\left[#1\right]}
\newcommand{\BK}[1]{\left\{#1\right\}}
\newcommand{\norm}[1]{\lVert #1 \rVert}
\newcommand{\deff}{\vec{X}_{\mathrm{eff}}}
\DeclareMathOperator{\spn}{span}
\DeclareMathOperator{\cspn}{\overline{span}}
\DeclareMathOperator{\trace}{Tr}
\DeclareMathOperator{\real}{Re}
\DeclareMathOperator{\imag}{Im}
\newtheorem{theorem}{Theorem}
\newtheorem{corollary}{Corollary}
\newtheorem{proposition}{Proposition}
\newtheorem{lemma}{Lemma}
\begin{document}
\title{Upper bounds on the Holevo Cramér-Rao bound for multiparameter quantum parametric and semiparametric estimation}

\author{Francesco Albarelli}
\email{francesco.albarelli@gmail.com}
\affiliation{Faculty of Physics, University of Warsaw, 02-093 Warszawa, Poland}
\affiliation{Department of Physics, University of Warwick, Coventry CV4 7AL, United Kingdom}

\author{Mankei Tsang}
\email{mankei@nus.edu.sg}
\homepage{https://blog.nus.edu.sg/mankei/}
\affiliation{Department of Electrical and Computer Engineering,
  National University of Singapore, 4 Engineering Drive 3, Singapore
  117583}
\affiliation{Department of Physics, National University of Singapore,
  2 Science Drive 3, Singapore 117551}

\author{Animesh Datta}
\email{animesh.datta@warwick.ac.uk}
\affiliation{Department of Physics, University of Warwick, Coventry CV4 7AL, United Kingdom}

\date{\today}




\begin{abstract}
We formulate multiparameter quantum estimation in the parametric and semiparametric setting.
While the Holevo Cramér-Rao bound (CRB) requires no substantial modifications in moving from the former to the latter, we generalize the Helstrom CRB appropriately.
We show that the Holevo CRB cannot be greater than twice the generalized Helstrom CRB.
We also present a tighter, intermediate, bound.
Finally, we show that for parameters encoded in the first moments of a Gaussian state there always exists a Gaussian measurement that gives a classical Fisher information matrix that is one-half of the quantum Fisher information matrix.
\end{abstract}

\maketitle

\section{Introduction}

In recent years, there has been a renewed interest in multiparameter quantum estimation stemming from efforts to deliver quantum-enhanced sensing devices~\cite{Humphreys2013,Vidrighin2014,Baumgratz2015,Szczykulska2016,Gagatsos2016a,Chrostowski2017,Pezze2017,Roccia2017,Roccia2018,Gessner2018,Yang2018b,Genoni2019,Gorecki2019,Rubio2019,Bisketzi2019,Liu2019d,Lu2019,Albarelli2019c}.
Multiparameter quantum estimation was initiated by Helstrom~\cite{Helstrom1967,helstrom1976quantum}, who derived a quantum version of the classical Cramér-Rao bound (CRB) on the mean square error matrix of an estimator.
The scalar Helstrom CRB is defined as the weighted trace of the inverse of the quantum Fisher information matrix (QFIM).
However, its evaluation requires solving a Lyapunov equation corresponding to a density matrix and inverting the QFIM, neither of which are known to be possible, in general, analytically.
Furthermore, the attainability of such a bound is not guaranteed due to the non-commutativity of observables in quantum mechanics.
Tighter attainable bounds have therefore been sought and identified since~\cite{Yuen1973,Belavkin1976,Holevo1976,Holevo2011b}.

The fundamental asymptotically attainable \emph{scalar} bound on the weighted trace of the mean square error matrix is the so-called Holevo CRB~\cite{Holevo1976,Holevo2011b,Nagaoka1989,Hayashi2008a,Yamagata2013,Yang2018a}.
Evaluating the Holevo CRB requires an optimisation over a set of Hermitian operators which is not known to be possible analytically except in a few non-trivial cases~\cite{Holevo1976,Suzuki2016a,Bradshaw2017a,Bradshaw2017,Suzuki2018,Sidhu2019a}, leaving it largely unscrutinised.
It was recently shown that the evaluation of the Holevo CRB for finite-dimensional systems is a convex optimisation problem solvable via semidefinite programming~\cite{Albarelli2019}.

In this work, we consider the general problem of estimating a $q$-dimensional function $\vec{\beta}(\vec{\theta})$ of $p$ unknown parameters $\vec{\theta}$, where $q$ is finite but we can have $p=\infty$.
Most of the existing literature on multiparameter quantum estimation is restricted to $p<\infty$ and $\vec{\beta}(\vec{\theta}) = \vec{\theta}$. 
Following the terminology of classical statistics, parametric estimation covers $p<\infty$ while semiparametric estimation $p=\infty,$ even though the techniques developed for the latter case can be applied to the former.
The $p<\infty$ scenario has been studied recently using an \emph{a priori} partitioning of the the parameters into those of interest and another set of so-called nuisance parameters~\cite{Suzuki2019a}.
Our approach is more direct, requiring no \emph{a priori} partitioning, and necessary for genuine semiparametric estimation.

Our first result generalizes the quantum semiparametric estimation of Ref.~\cite{Tsang2019} to the multiparameter setting.
First, we introduce the appropriate local unbiasedness conditions for this setting. 
Then, without assuming that the QFIM is nonsingular, we derive a generalized Helstrom CRB that works both for parametric and semiparametric estimation.
For $p<\infty$, it can be expressed by using the Moore-Penrose pseudoinverse of the QFIM.
The standard Helstrom CRB is recovered for a nonsingular QFIM.
We then consider the Holevo CRB in this semiparametric setting.
Its evaluation remains unchanged from the parametric case, but the appropriate local unbiasedness conditions have to be considered.

Our second result proves that the Holevo CRB is never greater that twice the generalised Helstrom CRB.
Indeed, we also point out a tighter intermediate upper bound that can be computed from the same optimal argument needed to evaluate the generalized Helstrom CRB.
For $p<\infty$ this intermediate bound can be computed from the symmetric logarithmic derivatives (SLDs). 
Both bounds are saturable.

Our final result shows that for parameters encoded in the first moments of a Gaussian state (often called a Gaussian shift model) there exists a Gaussian measurement whose classical Fisher information matrix (FIM) equals one-half the QFIM (these are known as Fisher symmetric POVMs~\cite{Li2016g}); this is derived for parametric estimation ($p<\infty$) only.

Our second result implies that the Holevo CRB cannot provide additional information about possible quantum enhancements in scaling in multiparameter estimation not already available from the Helstrom CRB.
Nevertheless, the  Helstrom CRB may be inadequate in some cases and the Holevo CRB necessary.
An instance is judging a given quantum state's performance in applications such as simultaneous phase and loss estimation in optical interferometry~\cite{Albarelli2019}.
Moreover, for pure states the Holevo CRB is attainable without the need for collective measurements on multiple copies and its evaluation also identifies an optimal measurement that attains the bound~\cite{Matsumoto2002}.
Thus, for pure states the evaluation of the Holevo CRB is a particularly meaningful task when results of Ref.~\cite{Pezze2017} for the QFIM are not applicable.
On the other hand, the evaluation of the Holevo CRB for mixed states, while in general not allowing to identify an optimal measurement, should nonetheless provide a deeper quantitative understanding of the role of collective quantum measurements~\cite{inprep}.

Our final result, in conjunction with quantum local asymptotic normality~\cite{Kahn2009,Yamagata2013,Yang2018a}, suggests that a mean square error matrix equaling one-half the QFIM may be attainable asymptotically for arbitrary quantum statistical models, a statement that we leave as a conjecture.


\emph{Note added.}
While completing this work we became aware of an independent alternative derivation of inequality~\eqref{sandwich}, but only for nonsingular finite-dimensional models, by Carollo \emph{et al.}~\cite{Carollo2019}.
We have also found a small gap in their derivation, which we close in Appendix~\ref{app:Carollo}.




\subsection*{Notation}
We assume a quantum system described by an Hilbert space $\mathsf{H}$ and we denote the Hilbert space of Hermitian operators acting on it as $\mathcal{L}_\mathsf{h}(\mathsf{H})$.
We express the positive semidefinitess of an operator (or matrix) as $A \succeq 0$.
A state of the system is described by a density operator $\rho \in \mathcal{L}_\mathsf{h}(\mathsf{H}): \, \rho \succeq 0, \,\, \Tr \rho = 1$.
For clarity, we keep the notation for the trace in Hilbert space $\Tr$ separate from the trace of matrices $\tr$.

We use a vectorial notation for collection of operators $\vec{X} = (X_1,\dots,X_n)^\top$, where $\top$ denotes the transpose; all vectors are assumed to be column vectors.
Similarly we collect the partial derivatives of an operator $A(\vec{\theta})$ in a vector $\partial_{\vec{\theta}} A = (\partial_1 A(\vec{\theta}),\dots,\partial_p A(\vec{\theta}))^\top$.
To simplify the notation, we define the following:
\begin{enumerate}
\item The Jordan product of two operators is
\begin{align}
X\circ Y \equiv \frac{1}{2}\bk{XY+YX}.
\end{align}
\item For two vectors of operators $\vec{X} = (X_1,\dots,X_q)^\top$
and $\vec{Y} = (Y_1,\dots,Y_q)^\top$,
\begin{align}
\vec{X}^\top \circ \vec{Y} &\equiv \sum_s X_s \circ Y_s.
\end{align}
\item For two vectors of operators
$\vec{X} = (X_1,\dots,X_p)^\top$ and $\vec{Y} = (Y_1,\dots,Y_q)^\top$,
$\vec{X}\circ \vec{Y}^\top$ is a $p\times q$ matrix of operators given by
\begin{align}
(\vec{X}\circ \vec{Y}^\top)_{st} &\equiv X_s \circ Y_t.
\end{align}
\item A real-valued inner product between two vectors of Hermitian operators is given by
\begin{align}
\Avg{\vec{X},\vec{Y}} &\equiv \Tr \rho \vec{X}^\top \circ \vec{Y} = 
\Tr \rho \sum_{s=1}^q X_s \circ Y_s;
\label{inner}
\end{align}
for $q=1$ this is known as the symmetric logarithmic derivative (SLD) inner product~\cite{Hayashi2017c}.
\item The operator norm is
\begin{align}
\norm{\vec{X}} &\equiv \sqrt{\Avg{\vec{X},\vec{X}}}.
\end{align}
\item 
The Hilbert space of all zero-mean vectors of Hermitian operators\footnote{The Hilbert space of square summable operators was introduced by Holevo for $q=1.$
If $\rho$ is not full rank each element is actually an equivalence class. 
This observation is not important for our treatment here, but for example it allows to reduce the number of variables to optimize in the evaluation of the Holevo CRB for rank-deficient density matrices~\cite{Albarelli2019}.} is
\begin{align}
\mathcal H &\equiv \BK{\vec{X}: X_s \in \mathcal{L}_{\mathsf{h}} ( \mathsf{H}) , \norm{\vec{X}} < \infty, \Tr \rho \vec{X} = 0}.
\end{align}
\item A complex covariance matrix of a vector of operators $\vec{X} \in \mathcal H$
\begin{align}
\label{eq:Zdef}
Z(\vec{X}) &\equiv \Tr \rho \vec{X} \vec{X}^\top,
\end{align}
i.e. the Gram matrix of $\vec{X}$ w.r.t. the so-called right logarithmic derivative (RLD) inner product~\cite{Hayashi2017c}.
\item
A real covariance matrix of a vector of operators $\vec{X} \in \mathcal H$
\begin{align}
\label{eq:Vdef}
V(\vec{X}) &\equiv \Tr \rho \vec{X} \circ \vec{X}^\top = \Re Z(\vec{X}),
\end{align}
i.e. the Gram matrix of $\vec{X}$ w.r.t. the so-called symmetric logarithmic derivative (SLD) inner product.
\end{enumerate}

\section{Quantum parametric and semiparametric estimation}

A quantum statistical model is a mapping from the parameter space $\Theta \subset \mathbb{R}^{p}$ to density operators $\rho_{\vec{\theta}}$, where 
\begin{align}
\vec{\theta} = \bk{\theta_1,\dots,\theta_p}^\top \in \Theta
\end{align}
represents the $p$-dimensional vector of real parameters.
We do not assume a parametric model ($p<\infty$), but we will derive more explicit expressions for this case.
Additionally, we do not assume the derivatives $\partial_i \rho_{\vec{\theta}} \equiv \partial \rho_{\vec{\theta}}/\partial \theta_i$ to be linearly independent, allowing the possibility of a singular quantum statistical model.
We consider the general case in which the parameters of interest are represented by a $q$-dimensional function of the original unknown parameters
\begin{align}
\vec{\beta}(\vec{\theta}) = \bk{\beta_1(\vec{\theta}),\dots,\beta_q(\vec{\theta})}^\top
\in \mathbb R^q,
\end{align}
with $q \leq p$.
Here, we assume that both functions $\rho_{\vec{\theta}}$ and $\vec{\beta}(\vec{\theta})$ are sufficiently smooth.
We also assume that the rank of $\rho_{\vec{\theta}}$ is fixed, to avoid pathologies~\cite{Seveso2019}.
Furthermore we assume that the $p{\times}q$ matrix $\partial_{\vec{\theta}} \vec{\beta}$ with elements\footnote{This is the transpose of the Jacobian matrix of $\vec{\beta}(\vec{\theta})$ as conventionally defined.}
\begin{align}
\parti{\beta_s(\vec{\theta})}{\theta_j} \equiv \bk{\partial_{\vec{\theta}} \vec{\beta}}_{js},
\label{eq:partial_beta}
\end{align}
has rank $q$.
Note that most of the existing literature on multiparameter quantum estimation is restricted to $p<\infty$, $\vec{\beta}(\vec{\theta}) = \vec{\theta}$, $\partial_{\vec{\theta}}\vec{\beta}=\id_p$, as well as linearly independent $\partial_i \rho_{\vec{\theta}}$.

\subsection{Local unbiasedness conditions for semiparametric estimation}
A measurement on a quantum system is mathematically described by a positive operator valued measure (POVM) $M$, with outcome space\footnote{For a precise measure theoretical definition see e.g.~\cite{Heinosaari2011a}.} $\Omega$.
The $\vec{\theta}$-dependent probability of random outcomes is given by the Born rule $p_{\vec{\theta}}(x) = \Tr \rho_{\vec{\theta}} M(x)$.

Let $\vec{\check\beta}$ be an unbiased estimator that satisfies
\begin{align}
\label{eq:unbias}
\int \vec{\check \beta} (x)\trace \rho_{\vec{\theta}} M(d x) &= \vec{\beta}(\vec{\theta}),
\end{align}
where the integration is always understood to be on $\Omega$.
Following the terminology of semiparametric estimation, we define a vector of influence operators as
\begin{align}
\vec{X}(\vec{\theta}) &\equiv \int 
\Bk{\vec{\check\beta}(x)-\vec{\beta}(\vec{\theta})} M(dx).
\end{align}
By expanding the unbiasedness condition~\eqref{eq:unbias} around the true parameter value, we get the following \emph{local} unbiasedness conditions
\begin{align}
\label{eq:locunbcomp}
\trace \rho_{\vec{\theta}} \vec{X}_s(\vec{\theta}) &= 0,
&
\trace \parti{\rho_{\vec{\theta}}}{\theta_j} X_s(\vec{\theta})
&= \parti{\beta_s(\vec{\theta})}{\theta_j},
\end{align}
which we abbreviate as
\begin{align}
\label{eq:locunbvec}
\trace \rho_{\vec{\theta}} \vec{X} &= 0,
&
\trace (\partial_{\vec{\theta}} \rho_{\vec{\theta}}) \vec{X}^\top &= \partial_{\vec{\theta}} \vec{\beta},
\end{align}
where $\partial_{\vec{\theta}} \vec{\beta}$ is the $p{\times}q$ matrix defined in~\eqref{eq:partial_beta}.
Here and in the following, we assume that all functions are evaluated at the true parameter value $\vec{\theta}$.

The POVM and the classical estimator $(M,\vec{\check\beta})$ are collectively called a locally unbiased measurement.
The accuracy of the estimate is quantified by the mean square error matrix, which coincides with the covariance matrix for (locally) unbiased estimators
\begin{align}
\Sigma(\vec{\theta}) &\equiv \int 
\Bk{\vec{\check\beta}(x)-\vec{\beta}(\vec{\theta})} 
\Bk{\vec{\check\beta}(x)-\vec{\beta}(\vec{\theta})}^\top \trace \rho_{\vec{\theta}} M(dx).
\end{align}
We remark that using the covariance matrix of locally unbiased estimators for $\vec{\beta}$ as a figure of merit is equivalent to considering the classical CRB~\cite{lehmann_theory_1998}, given by the inverse Fisher information matrix (FIM) for the POVM $M$, as shown in Ref.~\cite{Nagaoka1989}.
The FIM for a probability distribution $p_{\vec{\theta}}(x)$ is defined as 
\begin{equation}
\label{eq:classFIM}
F ( p_{\vec{\theta}} ) \equiv \int [ \partial_{\vec{\theta}}  \log p_{\vec{\theta}}(x)] [ \partial_{\vec{\theta}} \log p_{\vec{\theta}}(x) ]^\top p_{\vec{\theta}}(x) dx .
\end{equation}

\subsection{Generalized Helstrom CRB}
Quantum estimation theory is based on the SLD operators defined by
\begin{align}
\label{eq:SLDdef}
\parti{\rho_{\vec{\theta}}}{\theta_j} &= \rho_{\vec{\theta}} \circ L_j.
\end{align}
We collect them in a vector $\vec{L}.$
The QFIM $J$, also called Helstrom information matrix, is obtained from the SLDs as
\begin{align}
\label{eq:QFIM}
J &\equiv V(\vec{L}) = \trace \rho_{\vec{\theta}} \vec{L} \circ \vec{L}^\top.
\end{align}

We introduce the set $\mathcal{X}_{\vec{\theta}} \subset \mathcal H$ of influence operators that obey the local unbiasedness conditions:
\begin{align}
\label{eq:influenceX}
\mathcal{X}_{\vec{\theta}} &\equiv \BK{\vec{X}: \vec{X} \in \mathcal H
\textrm{ and }
\trace (\partial_{\vec{\theta}} \rho) \vec{X}^\top = \partial_{\vec{\theta}} \vec{\beta}}.
\end{align}
The influence operator of any given locally unbiased measurement must be in $\mathcal{X}_{\vec{\theta}}$.
The conditions $\trace (\partial_{\vec{\theta}} \rho) \vec{X}^\top = \partial_{\vec{\theta}} \vec{\beta}$ can be also expressed in terms of the SLDs as $\trace \vec{L} \circ \vec{X}^\top = \partial_{\vec{\theta}} \vec{\beta}$, applying their definition~\eqref{eq:SLDdef}.
In the following we assume that the set $\mathcal{X}_{\vec{\theta}}$ is not empty, otherwise some parameters of interest cannot be estimated; we will give conditions for this hold in the case $p < \infty$ in Theorem~\ref{lem_nonemptyInfluence}.

We now present a pair of lemmas that are needed for our first result in Theorem~\ref{thm_ghb}, which is to generalise the single parameter semiparametric Helstrom CRB of Ref.~\cite{Tsang2019} to multiple parameters.
\begin{lemma}[Ref.~\cite{Holevo2011b}]
  Let $\Sigma$ be the error covariance matrix of a locally unbiased
  measurement and $\vec{X} \in \mathcal{X}_{\vec{\theta}}$ be the influence operator
  with respect to the measurement. Then
\begin{align}
\Sigma &\succeq V(\vec{X}).
\end{align}
\label{lem_V}
\end{lemma}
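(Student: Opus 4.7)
The plan is to establish $\Sigma \succeq V(\vec{X})$ by showing the corresponding scalar inequality in every direction $\vec{u} \in \mathbb R^q$, and then recognising the resulting one-dimensional statement as an operator Schwarz inequality for the positive unital map associated to the POVM.

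First I would fix an arbitrary $\vec{u} \in \mathbb R^q$ and introduce the scalar deviation $Y(x) \equiv \vec{u}^\top[\vec{\check\beta}(x) - \vec{\beta}(\vec{\theta})]$ on $\Omega$. Using the definition of the influence operator, the Hermitian operator $X_u \equiv \vec{u}^\top \vec{X}$ can be rewritten as $X_u = \int Y(x)\,M(dx)$. Since $u_s u_t$ is symmetric in $s,t$, the Jordan product in the definition of $V(\vec{X})$ collapses to an ordinary square and one obtains
\begin{align*}
\vec{u}^\top V(\vec{X})\vec{u} &= \Tr \rho X_u^2, &
\vec{u}^\top \Sigma \vec{u} &= \int Y(x)^2\,\Tr \rho M(dx),
\end{align*}
so the lemma reduces to the scalar inequality
\begin{align*}
\int Y(x)^2\,\Tr \rho M(dx) \;\geq\; \Tr \rho \left(\int Y(x)\, M(dx)\right)^{2}.
\end{align*}

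I would close the argument by applying the operator Schwarz (Kadison) inequality to the positive unital map $\Phi\colon f \mapsto \int f(x)\,M(dx)$ from bounded real functions on $\Omega$ into $\mathcal{L}_\mathsf{h}(\mathsf{H})$: this yields $\Phi(Y^2) \succeq \Phi(Y)^2 = X_u^2$, and tracing against $\rho \succeq 0$ gives the required scalar inequality. An equivalent route, preferable if one wants to avoid invoking Kadison as a black box, is to use a Naimark dilation $M(dx) = W^\dagger P(dx) W$ with $W$ an isometry and $P$ projection-valued: the spectral theorem makes $\int Y^2 P(dx) = \bigl(\int Y\, P(dx)\bigr)^2$ an equality, and pulling back through $W W^\dagger \preceq \id$ on the dilated space converts this into the desired inequality for $M$. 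The main obstacle is essentially technical, namely ensuring that $X_u$ is well-defined and lies in $\mathcal{H}$ for a general (possibly infinite-outcome, possibly unbounded) POVM; this is automatic as soon as $\vec{X}$ is the influence operator of a genuinely locally unbiased measurement with finite mean square error, which is exactly the hypothesis of the lemma.
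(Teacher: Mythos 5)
Your proposal is correct and follows essentially the same route as the paper: project onto an arbitrary direction $\vec{u}$, use the symmetry of $u_s u_t$ to collapse the Jordan product into $(\vec{u}^\top\vec{X})^2$, and reduce the claim to the operator Schwarz inequality $\int f(x)^2\,M(dx)\succeq\bigl(\int f(x)\,M(dx)\bigr)^2$ for the POVM. The only difference is that the paper proves this last step by the elementary expansion of $\int[f(x)-F]\,M(dx)\,[f(x)-F]\succeq 0$ rather than citing Kadison's inequality or a Naimark dilation, which is a matter of presentation rather than substance.
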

\begin{proof}
Delegated to Appendix~\ref{app:lem_V}. 
\end{proof}
We define the tangent space as
\begin{align}
\mathcal T &= \bk{\cspn \BK{\vec{L}}}^q \subseteq \mathcal H.
\end{align}
$\mathcal T$ is called a replicating space~\cite{tsiatis06}. 
Each entry of an $\vec{h} \in \mathcal T$ obeys
\begin{align}
h_s &\in \cspn\BK{\vec{L}}\quad
\forall \vec{h} \in \mathcal T;
\label{hs}
\end{align}
clearly the linear span is closed ($\cspn \BK{\vec{L}} = \spn \BK{\vec{L}}$) when $p < \infty$.
We also define the orthocomplement of $\mathcal T$ with respect to $\mathcal H$ as
$\mathcal T^\perp$.

\begin{lemma}[Strong orthogonality condition for a replicating space]
For any $\vec{h} \in \mathcal T$ and any $\vec{g} \in \mathcal T^\perp$,
\begin{align}
\trace \rho_{\vec{\theta}} \vec{h} \circ \vec{g}^\top &= 0.
\label{strong}
\end{align}
Conversely, any $\vec{g} \in \mathcal H$ that satisfies Eq.~\eqref{strong}
for all $\vec{h} \in \mathcal T$ must be in $\mathcal T^\perp$.
\label{lem_strong}
\end{lemma}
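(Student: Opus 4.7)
The plan is to leverage the product structure $\mathcal{T} = (\cspn\{\vec{L}\})^q$, which permits testing orthogonality entry by entry rather than only through the scalar inner product. For the forward direction, fix $\vec{h} \in \mathcal{T}$, $\vec{g} \in \mathcal{T}^\perp$, and an index pair $(s,t)$; the goal is to show that the $(s,t)$ entry of the matrix in~\eqref{strong}, namely $\Tr \rho_{\vec{\theta}} h_s \circ g_t$, vanishes. I would construct a witness $\tilde{\vec{h}} \in \mathcal{T}$ by placing $h_s$ in the $t$-th slot and zeros in all remaining slots. Since $h_s \in \cspn\{\vec{L}\}$ (because $\vec{h} \in \mathcal{T}$) and $0$ lies in every subspace, $\tilde{\vec{h}}$ belongs to $\mathcal{T}$ by definition. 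The SLD inner product from~\eqref{inner} then collapses to a single term, $\langle \tilde{\vec{h}}, \vec{g} \rangle = \Tr \rho_{\vec{\theta}} h_s \circ g_t$, which must vanish because $\vec{g} \in \mathcal{T}^\perp$. Ranging over $(s,t)$ gives the full matrix identity~\eqref{strong}.

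For the converse I would simply take the matrix trace of the hypothesis: if $\Tr \rho_{\vec{\theta}} \vec{h} \circ \vec{g}^\top$ is the zero matrix for every $\vec{h} \in \mathcal{T}$, then its matrix trace $\sum_s \Tr \rho_{\vec{\theta}} h_s \circ g_s$ also vanishes, and this expression equals $\langle \vec{h}, \vec{g} \rangle$ by~\eqref{inner}; hence $\vec{g} \in \mathcal{T}^\perp$ by definition.

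I do not foresee a substantive obstacle. The only subtle point is that the witness $\tilde{\vec{h}}$ must be a genuine element of $\mathcal{H}$, with finite norm and zero mean under $\rho_{\vec{\theta}}$; both properties are inherited immediately from $h_s$, since padding with zeros preserves them. Crucially, no continuity or limiting argument involving the closed span is required, because $\mathcal{T}$ is a closed subspace by construction, so membership of the witness in $\mathcal{T}$ follows by its explicit definition rather than by approximation.
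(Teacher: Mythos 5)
Your proposal is correct and follows essentially the same route as the paper: the forward direction is proved by padding a single element of $\cspn\{\vec{L}\}$ into one slot of a test vector (the paper uses an arbitrary $h^{(1)}\delta_{st}$, you use $h_s$ itself, which is the same device) so that the scalar inner product isolates one matrix entry, and the converse is obtained by taking the matrix trace of~\eqref{strong} to recover $\avg{\vec{h},\vec{g}}=0$. No gaps.
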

\begin{proof}
By definition,
\begin{align}
\Avg{\vec{h},\vec{g}} &= \trace \rho_{\vec{\theta}} \vec{h}^\top \circ \vec{g} = 0.
\end{align}
Take $h_s = h^{(1)}\delta_{st}$, where $h^{(1)}$ is an arbitrary
element of $\cspn\BK{\vec{L}}$.  Then
\begin{align}
\Avg{\vec{h},\vec{g}} &= \trace \rho_{\vec{\theta}} h^{(1)} \circ g_t = 0.
\end{align}
In other words, 
\begin{align}
g_t &\perp \cspn\BK{S}\quad \forall t,
\end{align}
which leads to Eq.~(\ref{strong}) via Eq.~(\ref{hs}). Conversely,
Eq.~(\ref{strong}) must imply
$\trace \rho_{\vec{\theta}} \vec{h}^\top \circ \vec{g} = \avg{\vec{h},\vec{g}} = 0$, so
$\vec{g} \in \mathcal T^\perp$ if Eq.~(\ref{strong}) holds for any
$\vec{h} \in \mathcal T$.
\end{proof}

Our first result is as follows.
\begin{theorem}[Generalized Helstrom CRB]
  For any locally unbiased measurement and any $q\times q$ real
  positive-semidefinite weight matrix $W \succ 0$,
\begin{align}
\tr W \Sigma &\ge \min_{\vec{X} \in\mathcal{X}_{\vec{\theta}}} \tr W V(\vec{X}) 
\equiv  C^{\mathrm{GS}} = \tr W V(\deff),
\label{ghb}
\end{align}
where
\begin{align}
\label{eq:X_eff}
\deff \equiv \Pi( \vec{X} |\mathcal T)
\end{align}
is the projection of any $\vec{X} \in \mathcal{X}_{\vec{\theta}}$ into the tangent
space $\mathcal T$.  Furthermore, $\deff$ is the unique minimizing
element in $\mathcal{X}_{\vec{\theta}}$.
\label{thm_ghb}
\end{theorem}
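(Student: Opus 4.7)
The plan is to establish the bound and characterize its unique minimizer in three moves, both powered by the two lemmas already in place.

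First, I would combine Lemma~\ref{lem_V} with positivity of $W$: any locally unbiased measurement yields $\Sigma \succeq V(\vec{X})$ with $\vec{X}\in\mathcal{X}_{\vec{\theta}}$ its influence operator, and taking the trace against $W\succeq 0$ gives $\tr W\Sigma \geq \tr W V(\vec{X}) \geq \min_{\vec{X}'\in\mathcal{X}_{\vec{\theta}}} \tr W V(\vec{X}')$. The remaining work is to evaluate this constrained minimum and to exhibit a unique minimizer.

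Second, I would decompose any $\vec{X}\in\mathcal{X}_{\vec{\theta}}$ orthogonally in $\mathcal{H}$ as $\vec{X} = \Pi(\vec{X}|\mathcal{T}) + \vec{X}^\perp$ with $\vec{X}^\perp\in\mathcal{T}^\perp$. Lemma~\ref{lem_strong} then implies entry-wise that the $q\times q$ cross-term matrix $\Tr\rho\,\Pi(\vec{X}|\mathcal{T})\circ(\vec{X}^\perp)^\top$ (and its transpose) vanishes, giving the Pythagorean-type identity
\begin{equation}
V(\vec{X}) = V(\Pi(\vec{X}|\mathcal{T})) + V(\vec{X}^\perp).
\end{equation}
Because $V(\vec{X}^\perp)\succeq 0$ and $W\succ 0$, I would get $\tr W V(\vec{X}) \ge \tr W V(\Pi(\vec{X}|\mathcal{T}))$, with equality precisely when $V(\vec{X}^\perp)=0$, equivalently $\norm{\vec{X}^\perp}^2 = 0$, i.e.\ $\vec{X}^\perp = 0$ in $\mathcal{H}$---which will deliver uniqueness.

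Third, I would verify two structural facts: (i) $\Pi(\vec{X}|\mathcal{T})\in\mathcal{X}_{\vec{\theta}}$, and (ii) the projection does not depend on the particular $\vec{X}\in\mathcal{X}_{\vec{\theta}}$ chosen. Both hinge on rewriting the unbiasedness constraint, via $\partial_j\rho = \rho\circ L_j$ and cyclicity of $\Tr$, in the equivalent SLD form $\Tr\rho\,\vec{L}\circ\vec{X}^\top = \partial_{\vec{\theta}}\vec{\beta}$. Since each $L_j$ lies in $\cspn\{\vec{L}\}$, Lemma~\ref{lem_strong} gives $\Tr\rho\,\vec{L}\circ(\vec{X}^\perp)^\top = 0$, so $\Pi(\vec{X}|\mathcal{T})$ inherits the constraint; and if $\vec{X},\vec{X}'\in\mathcal{X}_{\vec{\theta}}$, their difference $\vec{Y}$ satisfies $\Tr\rho\,\vec{L}\circ\vec{Y}^\top = 0$, which by the converse direction of Lemma~\ref{lem_strong} places $\vec{Y}\in\mathcal{T}^\perp$, so both representatives share the same projection $\deff$. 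The step I expect to demand the most care is precisely this reformulation of the constraint and the use of the \emph{converse} direction of Lemma~\ref{lem_strong}; the payoff is that the argument works uniformly in $p$, which is essential in the semiparametric case $p=\infty$ where $\mathcal{T}$ is a genuine closed subspace of the infinite-dimensional $\mathcal{H}$ and the SLDs need not form a basis, so one cannot simply decompose coordinate-wise against a finite family.
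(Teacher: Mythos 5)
Your proposal is correct and follows essentially the same route as the paper's proof: Lemma~\ref{lem_V} for the matrix bound, the orthogonal decomposition $\vec{X}=\Pi(\vec{X}|\mathcal T)+\vec{X}^\perp$ with the strong Pythagorean identity from Lemma~\ref{lem_strong}, verification that $\Pi(\vec{X}|\mathcal T)\in\mathcal{X}_{\vec{\theta}}$ via the SLD form of the constraint, and uniqueness from $\tr W V(\vec{X}^\perp)=0\Rightarrow\norm{\vec{X}^\perp}=0$. Your explicit use of the converse direction of Lemma~\ref{lem_strong} to show that the projection is independent of the representative $\vec{X}$ is exactly the observation the paper makes at the end of its uniqueness argument.
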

\begin{proof}
  The inequality in Eq.~(\ref{ghb}) follows from Lemma~\ref{lem_V}.
  To prove the second part of the theorem, let the orthogonal
  decomposition of an arbitrary $\vec{X} \in \mathcal{X}_{\vec{\theta}}$ be
\begin{align}
\vec{X} &= \deff + \vec{g},
&
\deff &= \Pi(\vec{X}|\mathcal T),
&
\vec{g} &= \vec{X}-\deff \in \mathcal T^\perp.
\end{align}
Using~\eqref{eq:Vdef} and Lemma~\ref{lem_strong} then leads to the strong Pythagorean
theorem \cite{tsiatis06}
\begin{align}
V(\vec{X}) &= \trace \rho_{\vec{\theta}} \deff \circ \deff^\top
+\trace \rho_{\vec{\theta}} \vec{g} \circ \vec{g}^\top + 
\nonumber\\&\quad
\trace \rho_{\vec{\theta}} \deff\circ \vec{g}^\top
+ \trace \rho_{\vec{\theta}} \vec{g} \circ \deff^\top
\\
&= V(\deff) + V(\vec{g}).
\end{align}
Since $V(\vec{g}) \succeq 0$,
\begin{align}
V(\vec{X}) &\succeq V(\deff),
&
\tr W V(\vec{X}) &\ge \tr W V(\deff).
\end{align}
To prove that $\deff$ is the unique minimizing element
in $\mathcal{X}_{\vec{\theta}}$, first note that
$\deff \in \mathcal{X}_{\vec{\theta}}$, because $\Pi(\vec{X}|\mathcal T)
\in \mathcal H$ and
\begin{align}
\trace (\partial_{\vec{\theta}} \rho_{\vec{\theta}}) \deff^\top
&= \trace \rho_{\vec{\theta}} \vec{L} \circ \deff ^\top
= \trace \rho_{\vec{\theta}} \vec{L} \circ \bk{\vec{X} -\vec{g}} ^\top
\\
&= \trace (\partial_{\vec{\theta}} \rho_{\vec{\theta}}) \vec{X}^\top- \trace \rho_{\vec{\theta}} \vec{L} \circ \vec{g}^\top  = 
\partial_{\vec{\theta}} \vec{\beta}.
\end{align}
Now suppose that $V(\vec{X}') = V(\deff)$ for another
$\vec{X}' \in \mathcal{X}_{\vec{\theta}}$. Write
$\vec{X}' = \deff + \vec{g}'$, where
$\vec{g}' = \vec{X}'-\deff$ can be shown to be in
$\mathcal T^\perp$ via Lemma~\ref{lem_strong} since
$\trace \rho_{\vec{\theta}} \vec{L} \circ \vec{g}'^\top = 0$.  The strong Pythagorean
theorem can then be applied to $\vec{X}'$, giving
$V(\vec{X}') = V(\deff) + V(\vec{g}')$.  If
$V(\vec{X}') = V(\deff)$, $V(\vec{g}') = 0$, and
$\norm{\vec{g}'}^2 = \trace V(\vec{g}') = 0$. Thus $\vec{g}'$ must be the zero element,
$\vec{X}' = \vec{X}_{\rm eff}$, and by contradiction
$\vec{X}_{\rm eff} = \Pi(\vec{X}|\mathcal T)$ must be the unique
minimizing element in $\mathcal{X}_{\vec{\theta}}$. In other words,
$\Pi(\vec{X}|\mathcal T)$ for any $\vec{X} \in \mathcal{X}_{\vec{\theta}}$ gives the
same $\vec{X}_{\rm eff}$.
\end{proof}

For parametric estimation $(p < \infty)$ we find the conditions on $J$ and $\partial_{\vec{\theta}} \vec{\beta}$ under which $\mathcal{X}_{\vec{\theta}}$ is not empty; this generalizes directly the classical results of~\cite{Stoica2001} to quantum estimation.
\begin{theorem}
  Let $p<\infty$, the set of influence operators $\mathcal{X}_{\vec{\theta}}$ is not empty if and only if all the columns of $\partial_{\vec{\theta}} \vec{\beta}$ are in the range of $J$, i.e. $J J^{+} \partial_{\vec{\theta}} \vec{\beta} = \partial_{\vec{\theta}} \vec{\beta}$, where $J^{+}$ denotes the Moore-Penrose pseudoinverse of $J$.
\label{lem_nonemptyInfluence}
\end{theorem}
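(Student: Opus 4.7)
My plan is to prove both implications directly, leveraging the linear-algebraic fact that for the positive semidefinite QFIM $J$, $J J^{+}$ is the orthogonal projector onto $\on{range}(J)$, so the condition $J J^{+} \partial_{\vec{\theta}} \vec{\beta} = \partial_{\vec{\theta}} \vec{\beta}$ is equivalent to each column of $\partial_{\vec{\theta}} \vec{\beta}$ lying in $\on{range}(J)$. The substantive step is then to relate this range condition to the existence of a zero-mean influence operator satisfying $\Tr \rho_{\vec{\theta}} \vec{L} \circ \vec{X}^\top = \partial_{\vec{\theta}} \vec{\beta}$, which I would handle by an explicit construction in one direction and by an orthogonal decomposition onto the tangent space in the other.

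For the \emph{if} direction, I would assume the columns of $\partial_{\vec{\theta}} \vec{\beta}$ are in $\on{range}(J)$ and take $A = J^{+} \partial_{\vec{\theta}} \vec{\beta}$ (a $p \times q$ matrix) together with $X_s = \sum_{j=1}^{p} A_{js} L_j$. Since $p<\infty$ this is a finite linear combination of Hermitian operators with finite SLD norm, and tracing Eq.~\eqref{eq:SLDdef} gives $\Tr \rho_{\vec{\theta}} L_j = 0$, so $\Tr \rho_{\vec{\theta}} X_s = 0$ and $\vec{X}\in\mathcal H$. Then by~\eqref{eq:QFIM},
\begin{equation*}
\Tr \rho_{\vec{\theta}} \vec{L} \circ \vec{X}^\top = J A = J J^{+} \partial_{\vec{\theta}} \vec{\beta} = \partial_{\vec{\theta}} \vec{\beta},
\end{equation*}
so this $\vec{X}$ lies in $\mathcal{X}_{\vec{\theta}}$.

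For the \emph{only if} direction, I would pick any $\vec{X}\in\mathcal{X}_{\vec{\theta}}$ and perform the orthogonal decomposition $\vec{X} = \deff + \vec{g}$ with $\deff = \Pi(\vec{X}|\mathcal T)$ and $\vec{g}\in\mathcal T^\perp$, exactly as in the proof of Theorem~\ref{thm_ghb}. Because $p<\infty$, $\cspn\{\vec{L}\} = \spn\{L_1,\dots,L_p\}$, so each $(\deff)_s$ is a finite linear combination of the $L_j$ and I can write $(\deff)_s = \sum_{j=1}^{p} A_{js} L_j$ for some $p\times q$ coefficient matrix $A$. By Lemma~\ref{lem_strong} each $g_s$ is SLD-orthogonal to every element of $\cspn\{\vec{L}\}$, so $\Tr \rho_{\vec{\theta}} L_j \circ g_s = 0$ for all $j,s$, i.e.\ $\Tr \rho_{\vec{\theta}} \vec{L}\circ \vec{g}^\top = 0$, and therefore
\begin{equation*}
\partial_{\vec{\theta}} \vec{\beta} = \Tr \rho_{\vec{\theta}} \vec{L} \circ \vec{X}^\top = \Tr \rho_{\vec{\theta}} \vec{L} \circ \deff^\top = J A,
\end{equation*}
which places every column of $\partial_{\vec{\theta}} \vec{\beta}$ in $\on{range}(J)$, equivalent to $J J^{+}\partial_{\vec{\theta}}\vec{\beta} = \partial_{\vec{\theta}}\vec{\beta}$.

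The only real subtlety is the use of $p<\infty$ to guarantee $\cspn\{\vec{L}\} = \spn\{\vec{L}\}$, since this is what lets the projection $\deff$ be represented by a genuine finite $p\times q$ coefficient matrix $A$ and gives meaning to the matrix identity $JA = \partial_{\vec{\theta}}\vec{\beta}$; beyond this bookkeeping the argument is elementary linear algebra combined with the SLD characterisation of the QFIM, and I do not anticipate a genuine obstacle.
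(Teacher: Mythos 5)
Your proof is correct. The \emph{if} direction is essentially identical to the paper's: both exhibit the explicit element $(\partial_{\vec{\theta}}\vec{\beta})^\top J^{+}\vec{L}$ and verify membership in $\mathcal{X}_{\vec{\theta}}$ via $\Tr\rho_{\vec{\theta}}\vec{L}\circ\vec{L}^\top = J$ and $JJ^{+}\partial_{\vec{\theta}}\vec{\beta}=\partial_{\vec{\theta}}\vec{\beta}$. The \emph{only if} direction, however, takes a genuinely different route. The paper argues via the Cauchy--Schwarz inequality for the SLD inner product: for arbitrary $\mathbf{u},\mathbf{v}$ it derives $\bigl[\mathbf{v}^\top(\partial_{\vec{\theta}}\vec{\beta})\mathbf{u}\bigr]^2 \le (\mathbf{v}^\top J\mathbf{v})\bigl[\mathbf{u}^\top V(\vec{X})\mathbf{u}\bigr]$, then specializes $\mathbf{v}=\mathbf{k}\in\ker J$ and $\mathbf{u}=(\partial_{\vec{\theta}}\vec{\beta})^\top\mathbf{k}$ to force $(\partial_{\vec{\theta}}\vec{\beta})^\top\mathbf{k}=0$, i.e.\ the columns of $\partial_{\vec{\theta}}\vec{\beta}$ are orthogonal to $\ker J$. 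You instead project an arbitrary $\vec{X}\in\mathcal{X}_{\vec{\theta}}$ onto the tangent space, use Lemma~\ref{lem_strong} to kill the $\mathcal{T}^\perp$ component in $\Tr\rho_{\vec{\theta}}\vec{L}\circ\vec{X}^\top$, and read off $\partial_{\vec{\theta}}\vec{\beta}=JA$ directly, which exhibits an explicit preimage under $J$. There is no circularity: the existence of the projection only needs $\mathcal{T}$ to be closed (automatic for $p<\infty$) and the nonemptiness of $\mathcal{X}_{\vec{\theta}}$ is precisely the hypothesis of this direction. Your argument is arguably more structural and meshes naturally with Corollary~\ref{cor_hb}, where the same coefficient matrix is identified as $J^{+}\partial_{\vec{\theta}}\vec{\beta}$; the paper's Cauchy--Schwarz route is more self-contained (it does not invoke the projection machinery or Lemma~\ref{lem_strong}) and, being an adaptation of Holevo's Lemma~6.5.1, extends more directly to settings where one prefers not to introduce the replicating space. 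Both are valid proofs of the same equivalence.
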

\begin{proof}
It is an adaptation of Lemma 6.5.1 in Ref.~\cite{Holevo2011b} and is delegated to Appendix~\ref{app:lem_nonemptyInfluence}.
\end{proof}
Theorem~\ref{lem_nonemptyInfluence} means that to estimate all the parameters $\vec{\beta}(\vec{\theta})$ with a non-diverging variance when $J$ is singular the columns of $\partial_{\vec{\theta}} \vec{\beta}$ must be linear combinations of eigenvectors of $J$ corresponding to non-zero eigenvalues.
This is consistent with~\cite{Suzuki2019a}, where the singularity of $J$ is allowed only in the block pertaining to the nuisance parameters.	

In this case we provide a more explicit form for the generalized Helstrom CRB using the Moore-Penrose pseudoinverse of $J$.
\begin{corollary}[Parametric estimation]
When $p < \infty$ and the conditions $J J^{+} \partial_{\vec{\theta}} \vec{\beta} =  \partial_{\vec{\theta}} \vec{\beta}$ of Theorem~\ref{lem_nonemptyInfluence} are satisfied, where $J^{+}$ denotes the Moore-Penrose pseudoinverse, we have
\begin{align}
\deff &= \bk{\partial_{\vec{\theta}} \vec{\beta}}^\top J^{+} \vec{L},
\label{proj}
\\
V(\deff) &= \bk{\partial_{\vec{\theta}} \vec{\beta}}^\top J^{+}\partial_{\vec{\theta}} \vec{\beta},
\label{Veff}
\\
C^{\mathrm{GS}} &= \tr W \bk{\partial_{\vec{\theta}} \vec{\beta}}^\top J^{+} \partial_{\vec{\theta}} \vec{\beta} .
\label{CS}
\end{align}
\label{cor_hb}
\end{corollary}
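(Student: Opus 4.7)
The plan is to specialize Theorem~\ref{thm_ghb} to $p<\infty$ by exploiting the fact that the tangent space $\mathcal{T}=(\spn\{\vec{L}\})^q$ is finite-dimensional and spanned by the SLD components $L_1,\dots,L_p$. By the uniqueness clause of Theorem~\ref{thm_ghb}, it suffices to exhibit any single element of $\mathcal{X}_{\vec{\theta}}\cap \mathcal{T}$: that element is necessarily $\deff$.

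Since every $\vec{h}\in\mathcal{T}$ has the form $\vec{h}=C\vec{L}$ for some real $q\times p$ matrix $C$, I would rewrite the local unbiasedness condition on $\vec{h}$, expressed via~\eqref{eq:SLDdef} as $\trace\rho_{\vec{\theta}}\vec{L}\circ\vec{h}^\top=\partial_{\vec{\theta}}\vec{\beta}$, which collapses via the definition of the QFIM~\eqref{eq:QFIM} to the linear-algebra equation $J C^\top=\partial_{\vec{\theta}}\vec{\beta}$. The zero-mean condition is automatic because each $L_j$ has vanishing expectation in $\rho_{\vec{\theta}}$. By Theorem~\ref{lem_nonemptyInfluence} this matrix equation is consistent, and the pseudoinverse identity $JJ^+J=J$ shows that $C^\top=J^+\partial_{\vec{\theta}}\vec{\beta}$ is an explicit solution; using the symmetry of $J$, and hence of $J^+$, then yields~\eqref{proj}.

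The one subtle point I anticipate is the possible non-uniqueness of $C$ when $J$ is singular. I would handle this by observing that any two solutions differ by some $D$ with $JD^\top=0$, whence $\|D\vec{L}\|^2=\tr D J D^\top=0$, so the two choices of $C$ represent the very same element of the Hilbert space $\mathcal{H}$; this is the main (essentially the only) non-mechanical step, and it reduces to the standard fact that solutions of $Jx=b$ with $b$ in the range of $J$ agree modulo $\ker J$, on which the quadratic form induced by $J$ vanishes.

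With $\deff$ identified, formula~\eqref{Veff} follows from the identity $V(C\vec{L})=CJC^\top$, which is immediate from~\eqref{eq:Vdef} and~\eqref{eq:QFIM}, combined with the pseudoinverse identity $J^+JJ^+=J^+$. Finally,~\eqref{CS} is just the weighted trace of~\eqref{Veff}, and the optimality of $\tr W V(\deff)$ among all $\vec{X}\in\mathcal{X}_{\vec{\theta}}$ is inherited directly from Theorem~\ref{thm_ghb}.
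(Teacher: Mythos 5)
Your proof is correct, but it reaches \eqref{proj} by a genuinely different route from the paper's. The paper (Appendix~\ref{app:cor_hb}) computes the orthogonal projection $\Pi(\cdot\,|\mathcal{T})$ explicitly: it introduces the linear map $\Lambda(\vec{x})=\vec{L}^\top\vec{x}$, writes the projector onto $\spn\{\vec{L}\}$ as $\Lambda\Lambda^{+}$ with $\Lambda^{+}=(\Lambda^\dagger\Lambda)^{+}\Lambda^\dagger$ and $\Lambda^\dagger\Lambda=J$, and then substitutes the local unbiasedness conditions to land on \eqref{proj}. You never compute a projection: you exhibit the candidate $(\partial_{\vec{\theta}}\vec{\beta})^\top J^{+}\vec{L}$, check that it lies in $\mathcal{X}_{\vec{\theta}}\cap\mathcal{T}$ (the constraint collapses to $JC^\top=\partial_{\vec{\theta}}\vec{\beta}$, solved by $C^\top=J^{+}\partial_{\vec{\theta}}\vec{\beta}$ --- note that it is the hypothesis $JJ^{+}\partial_{\vec{\theta}}\vec{\beta}=\partial_{\vec{\theta}}\vec{\beta}$, rather than the identity $JJ^{+}J=J$ you cite, that does the work here), and then invoke Theorem~\ref{thm_ghb}. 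The clause you actually need is that $\Pi(\vec{X}|\mathcal{T})$ is the \emph{same} element for every $\vec{X}\in\mathcal{X}_{\vec{\theta}}$: since a member of $\mathcal{T}$ is its own projection, it must equal $\deff$; the ``unique minimizer'' clause alone would force you to also argue that your candidate minimizes. Your handling of the non-uniqueness of $C$ for singular $J$ --- two solutions differ by $D\vec{L}$ with $\tr DJD^\top=0$, hence by the zero element of $\mathcal{H}$ --- is exactly right and is what the paper's equivalence-class footnote silently relies on. As for what each approach buys: the paper's derivation yields the projection formula for arbitrary operators, not only those satisfying the unbiasedness constraints, which is reusable; yours is more elementary, avoids pseudoinverses of maps between Hilbert spaces, and its verification step is essentially the same computation as the reverse implication of Theorem~\ref{lem_nonemptyInfluence}, where the paper already checks that $\vec{X}_0=(\partial_{\vec{\theta}}\vec{\beta})^\top J^{+}\vec{L}\in\mathcal{X}_{\vec{\theta}}$, so within the paper's own logic your argument comes almost for free. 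The passage from \eqref{proj} to \eqref{Veff} and \eqref{CS} via $V(C\vec{L})=CJC^\top$ and $J^{+}JJ^{+}=J^{+}$ is the same in both.
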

\begin{proof}
Delegated to Appendix~\ref{app:cor_hb}.
\end{proof}

Another straight-forward corollary is obtained for a non-singular QFIM, recovering the well-known Helstrom CRB.
\begin{corollary}[Helstrom CRB~\cite{Helstrom1967,Hayashi2005}]
If $p < \infty$ and $J$ is not singular
\begin{equation}
C^\mathrm{GS} = \tr W \bk{\partial_{\vec{\theta}} \vec{\beta}}^\top J^{-1} \partial_{\vec{\theta}} \vec{\beta} \equiv C^{\mathrm{S}}
\end{equation}
\end{corollary}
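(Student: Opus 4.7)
The plan is to derive this corollary as an immediate specialization of Corollary~\ref{cor_hb} to the nonsingular setting. First I would observe that whenever the $p\times p$ QFIM $J$ is invertible, the Moore-Penrose pseudoinverse coincides with the ordinary inverse, $J^{+} = J^{-1}$, and hence $JJ^{+} = \id_p$. Consequently the range condition $JJ^{+}\partial_{\vec{\theta}}\vec{\beta} = \partial_{\vec{\theta}}\vec{\beta}$ of Theorem~\ref{lem_nonemptyInfluence} is satisfied trivially, which guarantees that $\mathcal{X}_{\vec{\theta}}$ is nonempty so Corollary~\ref{cor_hb} applies without further hypothesis.

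Substituting $J^{+}\to J^{-1}$ directly into Eq.~\eqref{CS} then yields
\begin{align}
C^{\mathrm{GS}} &= \tr W \bk{\partial_{\vec{\theta}} \vec{\beta}}^\top J^{-1} \partial_{\vec{\theta}} \vec{\beta},
\end{align}
which is the claim after identifying the right-hand side with $C^{\mathrm{S}}$. As a side benefit, Eqs.~\eqref{proj}--\eqref{Veff} also specialize to the standard textbook forms $\deff = (\partial_{\vec{\theta}}\vec{\beta})^\top J^{-1}\vec{L}$ and $V(\deff) = (\partial_{\vec{\theta}}\vec{\beta})^\top J^{-1}\partial_{\vec{\theta}}\vec{\beta}$, recovering the familiar parametric expressions.

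I do not anticipate any real obstacle, since the corollary is essentially a notational consequence of the preceding results; the only subtlety worth flagging is that the nonemptiness of $\mathcal{X}_{\vec{\theta}}$ must still be invoked via Theorem~\ref{lem_nonemptyInfluence} before Eq.~\eqref{CS} can be used, and this is precisely where the nonsingularity of $J$ pays off.
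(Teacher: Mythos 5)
Your argument is correct and matches the paper's proof, which likewise obtains the result by substituting $J^{+}=J^{-1}$ into Eq.~\eqref{CS} of Corollary~\ref{cor_hb}. Your additional remark that nonsingularity makes the range condition of Theorem~\ref{lem_nonemptyInfluence} hold trivially is a harmless (and slightly more explicit) elaboration of the same route.
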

\begin{proof}
For a nonsingular $J$ we have $J^{+}=J^{-1}$ in~\eqref{CS}.
\end{proof}
The equivalent definition of $C^{\text{S}}$ as a constrained quadratic minimization (cf.~\eqref{ghb}), without explicitly evaluating and inverting $J$, was presented and exploited in Ref.~\cite{Nagaoka1989,Ragy2016} for $\vec{\beta}(\vec{\theta})=\vec{\theta}$.
However, $C^{\text{GS}}$ is more general than $C^{\text{S}}$, because it can be applied to linearly dependent $\{\vec{L}\}$, for which $J^{-1}$ is undefined.
This is especially useful for semiparametric problems~\cite{Tsang2019}.

\subsection{Holevo CRB}
Unlike the Helstrom CRB, generalizing the Holevo CRB for multiparameter estimation to the semiparametric setting requires no substantial modification, since it does not rely on any matrix CRB.
The only relevant difference is to consider the appropriate set of influence operators~\eqref{eq:influenceX}.
Again, we start with a couple of preparatory lemmas.
\begin{lemma}[Ref.~\cite{Holevo2011b}]
  Let $\Sigma$ be the error covariance matrix of a locally unbiased
  measurement and $\vec{X} \in \mathcal{X}_{\vec{\theta}}$ be the influence operator
  with respect to the measurement. Then
\begin{align}
\Sigma &\succeq Z(\vec{X}).
\end{align}
\label{lem_Z}
\end{lemma}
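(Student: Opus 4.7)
The plan is to reduce the matrix inequality to its scalar sesquilinear form: for every $\vec{u}\in\mathbb{C}^q$ I would establish $\vec{u}^\dagger \Sigma \vec{u} \geq \vec{u}^\dagger Z(\vec{X})\vec{u}$. Since $\Sigma$ is real symmetric and $Z(\vec{X})$ is Hermitian (the latter because each $X_s$ is Hermitian, so $\overline{\Tr\rho_{\vec{\theta}} X_s X_t} = \Tr\rho_{\vec{\theta}} X_t X_s$), this pointwise bound on $\mathbb{C}^q$ implies $\Sigma - Z(\vec{X}) \succeq 0$ in the Hermitian sense. To expose the scalar structure, I would introduce the complex-valued function $g(x) \equiv \vec{u}^\top[\vec{\check\beta}(x)-\vec{\beta}(\vec{\theta})]$ and the operator $A \equiv \vec{u}^\top \vec{X}$. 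Short manipulations from \eqref{eq:Zdef} and the definitions of $\Sigma$ and of the influence operator $\vec{X}$ give
\begin{equation*}
\vec{u}^\dagger \Sigma \vec{u} = \int |g(x)|^2 \Tr\rho_{\vec{\theta}} M(dx), \qquad \vec{u}^\dagger Z(\vec{X})\vec{u} = \Tr\rho_{\vec{\theta}} A^\dagger A, \qquad A = \int g(x) M(dx).
\end{equation*}

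The crux of the argument is then the operator-valued Cauchy--Schwarz inequality for POVMs,
\begin{equation*}
A^\dagger A \preceq \int |g(x)|^2 M(dx),
\end{equation*}
after which multiplying by $\rho_{\vec{\theta}}$ and tracing immediately yields the scalar bound. I would prove this via a Naimark dilation: write $M(dx) = V^\dagger E(dx) V$ with $E$ a projection-valued measure on a dilated Hilbert space $\tilde{\mathsf H}$ and $V:\mathsf{H}\to\tilde{\mathsf H}$ an isometry. Then $A = V^\dagger \tilde A V$ with $\tilde A \equiv \int g(x) E(dx)$, and since $E$ is projective, $\tilde A^\dagger \tilde A = \int|g(x)|^2 E(dx)$. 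Using $VV^\dagger \preceq \id_{\tilde{\mathsf H}}$, one obtains
\begin{equation*}
A^\dagger A = V^\dagger \tilde A^\dagger (V V^\dagger) \tilde A V \preceq V^\dagger \tilde A^\dagger \tilde A V = \int |g(x)|^2 M(dx),
\end{equation*}
which is the required operator bound.

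The main obstacle is accommodating the noncommutativity of the POVM elements: because $M(x)M(y)$ is a nontrivial two-variable kernel, the product $\vec{X}\vec{X}^\top$ does not telescope into a single-variable integral as it would in the classical or projective case, so a direct term-by-term comparison of $\Sigma$ and $Z(\vec{X})$ is not available. The Naimark step sidesteps this by lifting to a projective measurement, after which the sharp inequality collapses to the triviality that $VV^\dagger$ is an orthogonal projector. Since $\vec{u}$ was arbitrary, the scalar bound upgrades to the matrix statement $\Sigma \succeq Z(\vec{X})$, completing the proof.
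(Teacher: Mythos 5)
Your proof is correct, and its overall skeleton matches the paper's: reduce the matrix inequality to a scalar one via an arbitrary complex vector, package the estimator deviation into a single complex function $g$ and the influence operators into $A=\int g(x)\,M(dx)$, establish the operator Cauchy--Schwarz bound $A^\dagger A\preceq \int|g(x)|^2 M(dx)$, then multiply by $\rho_{\vec{\theta}}$ and trace. Where you genuinely diverge is in how you prove that crux operator inequality. The paper does it by an elementary completion of squares: it observes that $\int\bigl[f(x)-F\bigr]M(dx)\bigl[f^*(x)-F^\dagger\bigr]\succeq 0$ term by term (each integrand is of the form $B\,M(dx)\,B^\dagger$ with $M(dx)\succeq 0$), and expanding with $\int M(dx)=\id$ immediately yields $\int|f|^2 M(dx)\succeq FF^\dagger$ -- no dilation, no appeal to any structure theorem. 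You instead invoke Naimark's theorem to lift $M$ to a projective measure $E$, use that $\tilde A^\dagger\tilde A=\int|g|^2E(dx)$ for the lifted operator, and squeeze the inequality out of $VV^\dagger\preceq\id$. Both are valid; your route makes the geometric content transparent (the loss is exactly the projection $VV^\dagger$ failing to be the identity on the dilated space, which also cleanly explains when equality holds), but it imports a nontrivial existence theorem where a two-line algebraic identity suffices. Your remark that the noncommutativity of $M(x)M(y)$ blocks a naive term-by-term comparison is accurate, but note that the paper's completion-of-squares trick already sidesteps it without leaving the original Hilbert space.
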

\begin{proof}
Delegated to Appendix~\ref{app:lem_Z}.
\end{proof}
\begin{lemma}[Belavkin and Grishanin~\cite{Belavkin1973}]
Let $A \succeq 0$ be a positive-semidefinite matrix.
Then
\begin{align}
\trace \real A &\succeq \norm{\imag A}_1.
\end{align}
\label{lem_bg}
\end{lemma}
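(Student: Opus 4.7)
The plan is to reduce to an elementary $2\times 2$ positivity calculation via a real orthogonal change of basis that puts the imaginary part of $A$ into canonical form. First I would write $A = B + iC$, where $B = \Re A$ is real symmetric and $C = \Im A$ is real antisymmetric (both forced by $A = A^\dagger$); positive semidefiniteness of $A$ also gives $B \succeq 0$, since $u^\top B u = u^\top (\Re A) u \geq 0$ for every real vector $u$. The target inequality then becomes $\tr B \geq \norm{C}_1$.

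Next I would invoke Youla's normal form: there exists a real orthogonal $O$ such that $OCO^\top$ is block-diagonal, with each nontrivial block a $2\times 2$ antisymmetric matrix whose off-diagonal entries are $\pm c_k$ for some $c_k \geq 0$, plus a possible zero block. Because real orthogonal conjugation preserves $\tr B$, preserves $\norm{C}_1$, and keeps $OAO^\top \succeq 0$, I may assume $C$ is already in this canonical form. Reading off the singular values gives $\norm{C}_1 = 2 \sum_k c_k$.

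The crux is then to examine, for each $k$, the $2\times 2$ principal submatrix of $A$ indexed by the pair $(i,j)$ corresponding to the $k$-th Youla block, namely
\begin{equation*}
\begin{pmatrix} b_{ii} & b_{ij} + i c_k \\ b_{ij} - i c_k & b_{jj}\end{pmatrix},
\end{equation*}
which inherits positive semidefiniteness from $A$. Its non-negative determinant yields $b_{ii} b_{jj} \geq b_{ij}^2 + c_k^2 \geq c_k^2$, whence AM--GM gives $b_{ii} + b_{jj} \geq 2 c_k$. Summing over $k$ and adding the non-negative diagonal entries of $B$ belonging to the zero block of $C$ delivers $\tr B \geq 2 \sum_k c_k = \norm{C}_1$, which is the claim.

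The only nontrivial ingredient is Youla's canonical form for real antisymmetric matrices, which is a standard linear-algebra fact; everything else is routine. The main potential obstacle is bookkeeping --- ensuring that the real orthogonal change of basis preserves every quantity in play simultaneously, and that the $2\times 2$ principal submatrices are extracted from $OAO^\top$ rather than from $A$ itself --- but this is purely mechanical once the decomposition is in place.
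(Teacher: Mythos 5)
Your proof is correct, but it takes a genuinely different route from the paper's. The paper first notes that $A\succeq 0$ and $A^*\succeq 0$ give the operator inequality $\real A \succeq \pm i\, \imag A$, and then evaluates $\tr \real A=\sum_s \mathbf{z}_s^\dagger(\real A)\mathbf{z}_s$ in the orthonormal eigenbasis $\{\mathbf{z}_s\}$ of the Hermitian matrix $i\,\imag A$, bounding each term by $\abs{\lambda_s}$ so that the sum becomes $\norm{\imag A}_1$ directly. You instead realify the problem: Youla's canonical form for the antisymmetric part, followed by non-negativity of the $2\times 2$ principal minors of $OAO^\top$ and AM--GM on each block. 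Both arguments are complete; in fact they are dual views of the same structure, since the eigenvectors of $i\,\imag A$ come in conjugate pairs $(\mathbf{e}_i\pm i\mathbf{e}_j)/\sqrt{2}$ spanning exactly your Youla planes. The paper's version is shorter and needs only the spectral theorem plus the observation $\real A\succeq \pm i\,\imag A$; yours trades that for the real canonical form but uses only determinants of $2\times 2$ blocks and yields the slightly stronger blockwise information $b_{ii}b_{jj}\ge b_{ij}^2+c_k^2$ along the way. Your bookkeeping concerns are all handled correctly: real orthogonal conjugation preserves $\tr B$, $\norm{C}_1$ and positivity, the diagonal entries of $B$ in the kernel block are non-negative, and AM--GM applies because the diagonal of a positive-semidefinite matrix is non-negative.
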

\begin{proof}
Delegated to Appendix~\ref{app:lem_bg}.
\end{proof}

We can now introduce the Holevo CRB.
\begin{theorem}[Holevo CRB~\cite{Holevo2011b,Nagaoka1989}]
\begin{align}
\tr W\Sigma &\ge \min_{\vec{X} \in \mathcal{X}_{\vec{\theta}}}
\Bk{\tr W \real Z(\vec{X}) + \norm{\sqrt{W}\imag Z(\vec{X})\sqrt{W}}_1} \label{hcrb} \\
& \equiv C^\mathrm{H}. \notag
\end{align}
\end{theorem}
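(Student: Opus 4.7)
The plan is to chain the two preparatory lemmas together. Lemma~\ref{lem_Z} gives, for any locally unbiased measurement with influence operator $\vec{X}\in\mathcal{X}_{\vec{\theta}}$ and error covariance $\Sigma$, the matrix inequality
\begin{align}
\Sigma-Z(\vec{X})\succeq 0.
\end{align}
Since $W\succeq 0$ admits a Hermitian positive-semidefinite square root $\sqrt{W}$, conjugation preserves this ordering, so that
\begin{align}
A\equiv \sqrt{W}\bk{\Sigma-Z(\vec{X})}\sqrt{W}\succeq 0.
\end{align}

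The key observation is that $\Sigma$ is a real symmetric matrix (it is the covariance matrix of a real-valued estimator), whereas $Z(\vec{X})$ is in general complex Hermitian. Therefore the real and imaginary parts of $A$ decompose cleanly:
\begin{align}
\real A &= \sqrt{W}\bk{\Sigma-\real Z(\vec{X})}\sqrt{W}, \\
\imag A &= -\sqrt{W}\imag Z(\vec{X})\sqrt{W}.
\end{align}
Now Lemma~\ref{lem_bg} applies to the positive-semidefinite $A$ and yields $\tr\real A\ge\norm{\imag A}_1$, which after substituting the two expressions above and using the cyclicity of the trace becomes
\begin{align}
\tr W\Sigma-\tr W\real Z(\vec{X}) &\ge \norm{\sqrt{W}\imag Z(\vec{X})\sqrt{W}}_1.
\end{align}
Rearranging and minimizing the right-hand side over $\vec{X}\in\mathcal{X}_{\vec{\theta}}$ (the set in which the influence operator of the locally unbiased measurement must lie) produces precisely the bound~\eqref{hcrb} defining $C^{\mathrm{H}}$.

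There is essentially no serious obstacle: the two lemmas already carry the analytic content, and the remaining work is the algebraic manipulation separating the real and imaginary parts and invoking invariance under $W\mapsto\sqrt{W}\,\cdot\,\sqrt{W}$. The only subtlety worth double-checking is the sign convention for $\imag A$ (absorbed by the absolute value inside the trace norm) and the fact that $\sqrt{W}$ is a genuine Hermitian square root so that $\tr W\Sigma=\tr\sqrt{W}\Sigma\sqrt{W}$ and analogously for $\real Z(\vec{X})$. If the treatment is intended to cover the semiparametric regime $p=\infty$, one should additionally note that Lemma~\ref{lem_Z} and Lemma~\ref{lem_bg} only involve the $q\times q$ matrices associated with the finite-dimensional parameter of interest $\vec{\beta}$, so no extra infinite-dimensional machinery is needed.
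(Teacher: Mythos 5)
Your proposal is correct and follows essentially the same route as the paper: apply Lemma~\ref{lem_Z}, conjugate by $\sqrt{W}$, invoke the Belavkin--Grishanin inequality of Lemma~\ref{lem_bg}, and use the realness of $W$ and $\Sigma$ to separate the real and imaginary parts before passing to the minimum over $\mathcal{X}_{\vec{\theta}}$. No gaps.
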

\begin{proof}
From Lemma~\ref{lem_Z},
\begin{align}
\sqrt{W}\Bk{\Sigma -Z(\vec{X})}\sqrt{W} \succeq 0.
\end{align}
From Lemma~\ref{lem_bg},
\begin{align}
\tr \real\sqrt{W}\Bk{\Sigma - Z(\vec{X})}\sqrt{W} &\ge
\norm{\imag\sqrt{W}\Bk{\Sigma-Z(\vec{X})}\sqrt{W}}_1.
\end{align}
Since $W$ and $\Sigma$ are real, we obtain
\begin{align}
\tr W \Sigma &\ge \tr W \real Z(\vec{X}) + 
\norm{\sqrt{W}\imag Z(\vec{X})\sqrt{W}}_1
\ge C^\mathrm{H}.
\end{align}
\end{proof}

The Holevo CRB is always more informative than the generalized Helstrom CRB, which we now show explicitly. 
Furthermore, it is also more informative than the RLD CRB~\cite{Yuen1973,Belavkin1976}, which we are not considering in this work.
\begin{corollary}
The Holevo CRB is never smaller than the generalized Helstrom CRB
\begin{equation}
C^\mathrm{GS} \leq C^\mathrm{H}
\end{equation}
\end{corollary}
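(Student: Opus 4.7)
The plan is to observe that the two bounds share the same feasible set $\mathcal{X}_{\vec{\theta}}$ and nearly the same objective, differing only by an extra nonnegative trace-norm term in the Holevo case. Concretely, I would first recall the identity $V(\vec{X}) = \real Z(\vec{X})$, which is the very definition recorded in~\eqref{eq:Vdef}. This already rewrites the generalized Helstrom objective as $\tr W\, \real Z(\vec{X})$, matching the first summand inside the Holevo minimization in~\eqref{hcrb}.

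Next I would argue that the second summand in the Holevo objective is nonnegative: since $\sqrt{W}\,\imag Z(\vec{X})\sqrt{W}$ is a Hermitian matrix (the weight $W \succeq 0$ is real symmetric and $\imag Z(\vec{X})$ is real antisymmetric, hence the conjugation yields a real antisymmetric matrix, which is $\mathrm{i}$ times a Hermitian one), its Schatten-$1$ norm is nonnegative by definition. Therefore, for every $\vec{X} \in \mathcal{X}_{\vec{\theta}}$,
\begin{align}
\tr W\, \real Z(\vec{X}) + \norm{\sqrt{W}\,\imag Z(\vec{X})\sqrt{W}}_1 \;\ge\; \tr W\, V(\vec{X}) \;\ge\; C^{\mathrm{GS}},
\end{align}
where the last inequality uses the definition of $C^{\mathrm{GS}}$ as the infimum over $\mathcal{X}_{\vec{\theta}}$ (Theorem~\ref{thm_ghb}).

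Finally, taking the infimum of the left-hand side over $\vec{X} \in \mathcal{X}_{\vec{\theta}}$ yields $C^{\mathrm{H}} \ge C^{\mathrm{GS}}$, which is the claim. There is essentially no obstacle here: the result is a direct consequence of the identity $V = \real Z$ together with nonnegativity of a norm, and the minimizations are performed over the same admissible set, so no comparison between two different optimizers is required. The only minor point to verify cleanly is that $\sqrt{W}\,\imag Z(\vec{X})\sqrt{W}$ has a well-defined trace norm with the claimed nonnegativity, which follows because $Z(\vec{X})$ is a complex positive-semidefinite Gram matrix (cf.~\eqref{eq:Zdef}) and the congruence by $\sqrt{W}$ preserves Hermiticity of its real and imaginary parts.
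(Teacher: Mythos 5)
Your proof is correct and follows essentially the same route as the paper's: both drop the nonnegative trace-norm term from the Holevo objective, use the identity $V(\vec{X}) = \real Z(\vec{X})$, and compare the two minimizations over the common feasible set $\mathcal{X}_{\vec{\theta}}$. The only cosmetic difference is that the paper evaluates the chain of inequalities at an optimizer $\vec{X}_{\mathrm{opt}}$ of the Holevo problem, whereas you take the infimum of a pointwise inequality, which is equivalent (and does not require the minimum to be attained).
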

\begin{proof}
Let $\vec{X}_\mathrm{opt} \in \mathcal{X}_{\vec{\theta}}$ be an argument that achieves the minimization in Eq.~\eqref{hcrb}.
Then
\begin{align}
C^\mathrm{H} &= 
\tr W \real Z(\vec{X}_\mathrm{opt}) + \norm{\sqrt{W}\imag Z(\vec{X}_\mathrm{opt})\sqrt{W}}_1
\\
&\ge \tr W \real Z(\vec{X}_\mathrm{opt}) = \tr W V(\vec{X}_\mathrm{opt})
\\
&\ge \min_{\vec{X} \in \mathcal{X}_{\vec{\theta}}} W V(\vec{X}) = C^\mathrm{GS}.
\end{align}
\end{proof}

Finally, we mention an alternative definition of the Holevo CRB~\cite{Nagaoka1989,Hayashi2017c}
\begin{align}
C^\mathrm{H}=\min_{V \in \mathbb{S}^q, \vec{X} \in \mathcal \mathcal{X}_{\vec{\theta}}}
\Bk{\tr W V  \, \, \vert  \, \,  V \succeq Z(\vec{X})},
\end{align}
where $\mathbb{S}^q$ is the set of $q{\times}q$ real-valued symmetric matrices.
For $\dim \mathsf{H} < \infty$ and $p<\infty$, this form allows to easily recast the bound as a semidefinite program after choosing a basis for $\mathcal{L}_{\mathsf{h}}(\mathsf{H})$.
This is a straight-forward generalization of the method proposed in Ref.~\cite{Albarelli2019}, where only the case $\vec{\beta}=\vec{\theta}$ and $J \succ 0$ was considered.

\section{Upper bounds on the Holevo CRB}\label{sec:uppbound}
Our main results are the two inequalities in the following theorem.
\begin{theorem}
The Holevo CRB can be upper bounded as
\begin{align}
C^\mathrm{H} \le C^\mathrm{D} \le 2C^\mathrm{GS},
\label{sandwich}
\end{align}
where $C^\mathrm{D} \equiv \tr W \real Z(\deff) + 
\norm{\sqrt{W}\imag Z(\deff)\sqrt{W}}_1$ and $\deff$ is defined in~\eqref{eq:X_eff}.
\end{theorem}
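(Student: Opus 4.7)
The plan is to prove the two inequalities in~\eqref{sandwich} separately; both should follow almost directly from the machinery already in place, so most of the work is bookkeeping.

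First I would prove the left inequality $C^\mathrm{H}\le C^\mathrm{D}$. The key observation is that $\deff\in\mathcal{X}_{\vec{\theta}}$ (established already as part of the proof of Theorem~\ref{thm_ghb}), and $C^\mathrm{H}$ is defined in~\eqref{hcrb} as the minimum of $\tr W \Re Z(\vec{X})+\norm{\sqrt{W}\Im Z(\vec{X})\sqrt{W}}_1$ over $\vec{X}\in\mathcal{X}_{\vec{\theta}}$. Substituting the admissible point $\vec{X}=\deff$ into this objective gives exactly $C^\mathrm{D}$, so evaluating a minimum at one feasible point produces the desired upper bound.

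Next I would tackle the right inequality $C^\mathrm{D}\le 2C^\mathrm{GS}$ by splitting $C^\mathrm{D}$ into its real and imaginary contributions. The real contribution equals $C^\mathrm{GS}$ directly: using $V=\Re Z$ from~\eqref{eq:Vdef} and the characterisation of $C^\mathrm{GS}$ in Theorem~\ref{thm_ghb}, $\tr W\Re Z(\deff)=\tr W V(\deff)=C^\mathrm{GS}$. It then remains to show that the imaginary contribution obeys $\norm{\sqrt{W}\Im Z(\deff)\sqrt{W}}_1\le C^\mathrm{GS}$; adding the two estimates yields $C^\mathrm{D}\le 2C^\mathrm{GS}$.

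For this last estimate I would invoke Lemma~\ref{lem_bg} (Belavkin--Grishanin) applied to $A=\sqrt{W}\,Z(\deff)\,\sqrt{W}$. The hypothesis that needs checking is $A\succeq 0$, which is the one step that is not pure symbol manipulation: it follows because $Z(\vec{X})\succeq 0$ for any $\vec{X}\in\mathcal H$ (for any complex $\vec{c}$, $\vec{c}^\dagger Z(\vec{X})\vec{c}=\Tr\rho\, Y^\dagger Y$ with $Y=\sum_j c_j X_j$, using $\rho\succeq 0$ and Hermiticity of the $X_j$) and because $W\succ 0$ makes $\sqrt{W}$ a well-defined real symmetric matrix, giving a valid congruence. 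Lemma~\ref{lem_bg} then delivers $\tr \Re A\ge\norm{\Im A}_1$, which reads $\tr W V(\deff)\ge\norm{\sqrt{W}\Im Z(\deff)\sqrt{W}}_1$, completing the argument. I do not foresee any genuine obstacle here; the whole proof is substitution into definitions plus a single invocation of the Belavkin--Grishanin inequality, and the only mildly delicate point is the positivity check that licenses that invocation.
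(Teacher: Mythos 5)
Your proposal is correct and follows essentially the same route as the paper: the first inequality is obtained by evaluating the minimand of~\eqref{hcrb} at the feasible point $\deff\in\mathcal{X}_{\vec{\theta}}$, and the second by applying Lemma~\ref{lem_bg} to $\sqrt{W}Z(\deff)\sqrt{W}\succeq 0$ to bound the trace-norm term by $\tr W\real Z(\deff)=C^{\mathrm{GS}}$. Your explicit verification that $Z(\vec{X})\succeq 0$ is a welcome detail that the paper leaves implicit.
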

\begin{proof}
The first inequality holds by definition~\eqref{hcrb}
\begin{align}
C^\mathrm{H} &\le 
\trace W \real Z(\deff) + 
\norm{\sqrt{W}\imag Z(\deff)\sqrt{W}}_1 = C^\mathrm{D} .
\end{align}
To obtain the second inequality note that
\begin{align}
\sqrt{W} Z \sqrt{W} \succeq 0,
\end{align}
so Lemma~\ref{lem_bg} gives
\begin{align}
\norm{\sqrt{W}\imag Z\sqrt{W}}_1
&= \norm{\imag \sqrt{W}Z\sqrt{W}}_1 
\\
&\le \tr \real \sqrt{W} Z \sqrt{W}= \tr W \real Z.
\end{align}
Hence,
\begin{align}
C^\mathrm{D} \le 2\tr W \real Z(\deff) = 2 \tr W V(\deff)
= 2C^\mathrm{GS}.
\end{align}
\end{proof}
We call $C^\mathrm{D}$ the D-invariant CRB, because it is equal to $C^\mathrm{H}$ for any $W \succ 0$ when the model is D-invariant~\cite{Holevo2011b,Suzuki2016a,Suzuki2018}; in this case it also equals the RLD scalar CRB.
Without delving into the details, D-invariance is a mathematical property that guarantees that the optimization in~\eqref{hcrb} can be restricted to $\mathcal{X}_{\vec{\theta}} \cap \mathcal{T}$, thus giving $\deff$ as the optimal argument.

Two observations on the bound $C^\mathrm{D}$ are in order.
First, it is tighter than $2 C^\text{GS}$, yet obtained from $\deff$ by solving the same optimization problem (or by finding the SLDs for $p<\infty$, see Proposition~\ref{cohCRB}).
Second, this upper bound can be a loose restriction on the difference $C^\text{H}-C^\text{GS}$, which can be small and cannot be estimated without explicitly evaluating $C^\text{H}$.
The problem of noisy 3D magnetometry with multi-qubit systems offers such an illustration~\cite{Albarelli2019}.

In the parametric case we can write $C^\mathrm{D}$ more explicitly, using the expression~\eqref{proj} for $\deff$.
This is a generalization of the inequality derived in Ref.~\cite{Suzuki2016a} for $J \succ 0$.
\begin{proposition}
\label{cohCRB}
For $p<\infty$
\begin{align}
C^\mathrm{D} =& \tr W  (\partial_{\vec{\theta}} \vec{\beta})^\top J^{+} (\partial_{\vec{\theta}} \vec{\beta}) \notag \\
& + \norm{\sqrt{W} (\partial_{\vec{\theta}} \vec{\beta})^\top J^{+} D J^{+} (\partial_{\vec{\theta}} \vec{\beta}) \sqrt{W}}_1 \\
=& C^\mathrm{GS} + \norm{\sqrt{W} (\partial_{\vec{\theta}} \vec{\beta})^\top J^{+} D J^{+} (\partial_{\vec{\theta}} \vec{\beta}) \sqrt{W}}_1 ,
\label{eq:CD}
\end{align}
where we have introduced the matrix $D \equiv \imag Z (\vec{L})$.
\end{proposition}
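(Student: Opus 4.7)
The plan is to start from the definition $C^\mathrm{D} = \tr W \real Z(\deff) + \lVert\sqrt{W}\imag Z(\deff)\sqrt{W}\rVert_1$ and substitute the explicit form of $\deff$ obtained in Corollary~\ref{cor_hb}, namely $\deff = (\partial_{\vec{\theta}} \vec{\beta})^\top J^{+} \vec{L}$. The entire proposition will then follow from direct algebraic manipulation using the defining properties of the Moore-Penrose pseudoinverse together with the definitions $Z(\vec{L}) = J + \I D$ and $J = V(\vec{L}) = \real Z(\vec{L})$.

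Set $A \equiv (\partial_{\vec{\theta}} \vec{\beta})^\top J^{+}$, which is a $q \times p$ matrix of real scalars. Since $\deff = A \vec{L}$ with scalar coefficients, bilinearity of the expectation gives the pullback identity
\begin{align}
Z(\deff) = \Tr \rho (A\vec{L})(A\vec{L})^\top = A \, Z(\vec{L}) \, A^\top,
\end{align}
which I would justify entrywise as $\Tr \rho \deff_i \deff_j = \sum_{k,l} A_{ik} A_{jl} \Tr \rho L_k L_l$. Splitting into real and imaginary parts (legitimate because $A$ is real) yields $\real Z(\deff) = A J A^\top$ and $\imag Z(\deff) = A D A^\top$.

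Next, I would use the hypothesis of Theorem~\ref{lem_nonemptyInfluence} that $J J^{+} \partial_{\vec{\theta}} \vec{\beta} = \partial_{\vec{\theta}} \vec{\beta}$, together with the defining identity $J^{+} J J^{+} = J^{+}$ of the Moore-Penrose pseudoinverse (and the fact that $J$ is real symmetric, so $J^{+}$ is too), to simplify
\begin{align}
\real Z(\deff) &= (\partial_{\vec{\theta}} \vec{\beta})^\top J^{+} J J^{+} (\partial_{\vec{\theta}} \vec{\beta}) = (\partial_{\vec{\theta}} \vec{\beta})^\top J^{+} (\partial_{\vec{\theta}} \vec{\beta}),\\
\imag Z(\deff) &= (\partial_{\vec{\theta}} \vec{\beta})^\top J^{+} D J^{+} (\partial_{\vec{\theta}} \vec{\beta}).
\end{align}
Plugging these into the definition of $C^\mathrm{D}$ delivers the first displayed equation; the second equation then follows immediately by recognizing the first term as $C^\mathrm{GS}$ via Eq.~\eqref{CS} of Corollary~\ref{cor_hb}.

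The argument is essentially a routine computation, so there is no real obstacle; the only mildly delicate step is the pseudoinverse manipulation $J^{+} J J^{+} = J^{+}$ in the semiparametric/singular-$J$ setting, and ensuring that the range condition of Theorem~\ref{lem_nonemptyInfluence} is invoked so that the substitution $\deff = (\partial_{\vec{\theta}} \vec{\beta})^\top J^{+} \vec{L}$ is valid in the first place.
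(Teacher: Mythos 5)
Your proposal is correct and follows essentially the same route as the paper: the paper's proof is precisely the one-line substitution $Z(\deff) = (\partial_{\vec{\theta}} \vec{\beta})^\top J^{+} Z(\vec{L}) J^{+} (\partial_{\vec{\theta}} \vec{\beta})$ obtained from Eq.~\eqref{proj}, with the real/imaginary split and the identity $J^{+} J J^{+} = J^{+}$ left implicit. You have simply written out those implicit steps (the entrywise pullback of $Z$ through the real matrix $A$ and the pseudoinverse simplification), which is a faithful and complete elaboration of the paper's argument.
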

\begin{proof}
Using~\eqref{proj} for $\deff$ when $p<\infty$ we get $Z \left( \deff \right) = (\partial_{\vec{\theta}} \vec{\beta})^\top J^{+} Z(\vec{L}) J^{+} (\partial_{\vec{\theta}} \vec{\beta})$.
\end{proof}

Finally, we consider the case of a scalar function $\beta(\vec{\theta})$.
\begin{proposition}
If $\beta(\vec{\theta})$ is a scalar ($q = 1$) $C^\mathrm{H} = C^\mathrm{GS}$.
\label{prop_scalar}
\end{proposition}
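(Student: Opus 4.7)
The plan is to exploit the fact that, for $q=1$, the influence operator $\vec{X}$ collapses to a single Hermitian operator $X$ and the matrix $Z(X)$ reduces to a scalar. First I would observe that $Z(X)=\Tr\rho\,X^2$ is real: since $\rho$ and $X$ are Hermitian and the trace is cyclic, $(\Tr\rho X^2)^* = \Tr X^2\rho = \Tr\rho X^2$. Equivalently, $\Re Z(X) = V(X)$ and $\Im Z(X) = 0$ for every $X\in\mathcal{X}_{\vec{\theta}}$. In the scalar case the weight matrix $W$ is also just a positive real number $w>0$.

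Substituting these facts into the definition \eqref{hcrb} of the Holevo CRB, the norm term $\|\sqrt{W}\,\Im Z(X)\sqrt{W}\|_1$ vanishes identically, so
\begin{align}
C^\mathrm{H} = \min_{X\in\mathcal{X}_{\vec{\theta}}} w\,\Re Z(X) = \min_{X\in\mathcal{X}_{\vec{\theta}}} w\,V(X),
\end{align}
which is precisely the right-hand side of \eqref{ghb}, namely $C^\mathrm{GS}$. The reverse inequality $C^\mathrm{H}\geq C^\mathrm{GS}$ is already established in the corollary just above the proposition, so this one-line chain of equalities closes the argument.

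There is essentially no obstacle: the content of the proposition is the observation that the imaginary/antisymmetric piece of $Z$, which encodes the quantum incompatibility of the estimators, disappears when there is only one parameter of interest to estimate. I would therefore present the proof in a compact form, stressing (i) that Hermiticity of $X$ forces $Z(X)\in\mathbb{R}$, and (ii) that the $\|\cdot\|_1$ term in \eqref{hcrb} then vanishes for every admissible $X$, so the two minimizations over $\mathcal{X}_{\vec{\theta}}$ defining $C^\mathrm{H}$ and $C^\mathrm{GS}$ coincide.
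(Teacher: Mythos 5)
Your proposal is correct and follows essentially the same route as the paper: both arguments note that for $q=1$ the Hermiticity of $X$ forces $Z(X)=\Tr\rho_{\vec{\theta}}X^2$ to be real, so the trace-norm term in the definition of $C^\mathrm{H}$ vanishes and the two minimizations over $\mathcal{X}_{\vec{\theta}}$ coincide. Your extra remarks (the explicit cyclicity argument for realness and the appeal to the corollary for $C^\mathrm{H}\ge C^\mathrm{GS}$) are harmless but not needed, since the chain of equalities already closes the argument.
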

\begin{proof}
If $q = 1$, $Z(\vec{X}) = \trace \rho_{\vec{\theta}} \vec{X}^2$ is a scalar and real, 
meaning that $\imag Z(\vec{X}) = 0$.
Then
\begin{align}
C^\mathrm{H} &= \min_{\vec{X} \in \mathcal{X}_{\vec{\theta}}} \tr W \real Z(\vec{X}) = 
\min_{\vec{X} \in \mathcal{X}_{\vec{\theta}}} \tr W V(\vec{X}) = C^\mathrm{GS}.
\end{align}
\end{proof}

Proposition~\ref{prop_scalar} is especially relevant to
Ref.~\cite{Tsang2019}, which employs the generalized Helstrom bound
$C^\mathrm{GS}$ for a scalar $\beta$ and a multidimensional $\vec{\theta}$.
As $C^\mathrm{H}$ is asymptotically attainable~\cite{Kahn2009,Yamagata2013,Yang2018a}
(at least when $p < \infty$), the proposition implies that the $C^\mathrm{GS}$
bounds computed in Ref.~\cite{Tsang2019} for semiparametric problems are also asymptotically attainable in the finite-dimensional case.




Note that the upper bounds~\eqref{sandwich} are tight and can be attained.
For example, quantum tomography of pure states is an estimation problem for which our bounds are saturated, that is, $C^{\mathrm{D}}=C^{\text{H}}=2 C^{\text{S}}$.
Li \emph{et al}.~\cite{Li2016g} have shown that, for the estimation of all the $2d-2$ parameters of a finite-dimensional pure state, there exists a POVM with a classical Fisher information matrix $F= \frac{1}{2} J$, irregardless of $d$.
This POVM attains the Gill-Massar inequality~\cite{Gill2000}, which is equivalent to attaining the Holevo CRB for pure states~\cite{Matsumoto2002} and implies our claim that $C^{\text{H}}=2 C^{\text{S}}$ for this model.

\section{Attaining one-half of the QFIM in Gaussian shift models}
In this section we restrict ourselves to $p<\infty$.
A Gaussian state $\rho_{\vec{\theta}}^{\text{G}}$ is the thermal state (ground state if pure) of a Hamiltonian quadratic in the canonical operators $\vec{r} = (x_1, p_1, \dots x_k, p_k)^\top$ for $k$ bosonic modes, satisfying $ [ r_i , r_j ] = \I \Omega_{ij}$ with $\Omega = \I \oplus_{i=1}^k \sigma_y $, with the Pauli matrix $(\sigma_y)_{11}=(\sigma_y)_{22}=0,\, (\sigma_y)_{21}=-(\sigma_y)_{12}=\I$.
In particular we consider the quantum statistical model where the parameter dependence is only in the first moments $\overline{\vec{r}}_{\vec{\theta}} \equiv \Tr \left[ \rho_G \vec{r} \right]$ of the Gaussian state, while the parameter-independent covariance matrix (CM) is $\sigma \equiv 2 V(\vec{r} -\overline{\vec{r}}_{\vec{\theta}}) =\Tr \left[ \rho_{\vec{\theta}}^{\text{G}} ( \vec{r} -\overline{\vec{r}}_{\vec{\theta}}) \circ ( \vec{r} -\overline{\vec{r}}_{\vec{\theta}})^\top \right] $.
A Gaussian measurement is defined by an arbitrary physical CM $\sigma_{\text{m}} \succeq \I \Omega $ and a vector of $2k$ outcomes $\vec{r}_{\text{out}}$, physically implemented by noisy general-dyne detection.
The corresponding probability distribution reads~\cite{serafini2017quantum,Genoni2016}
\begin{equation}
p_{\vec{\theta}}(\vec{r}_\text{out}) = \frac{ \exp\left[ - \left(\vec{r}_\text{out} - \overline{\vec{r}}_{\vec{\theta}} \right)^\top \left( \sigma + \sigma_{\text{m}} \right)^{-1} \left( \vec{r}_\text{out} - \overline{\vec{r}}_{\vec{\theta}} \right)\right]}{\pi^{k} \sqrt{\det \left( \sigma + \sigma_{\text{m}}  \right)}}
\end{equation}
and the associated classical FIM~\eqref{eq:classFIM} is
\begin{equation}
\label{eq:gaussianCFIM}
F ( \vec{r}_{\vec{\theta}}, \sigma, \sigma_{\text{m}} ) \equiv 2 \left( \partial_{\vec{\theta}} \vec{r}_{\vec{\theta}} \right)^\T \left( \sigma + \sigma_{\text{m}} \right)^{-1} \left( \partial_{\vec{\theta}} \vec{r}_{\vec{\theta}} \right),
\end{equation}
where we have introduced the $2k{\times}p$ Jacobian matrix with elements $\left(\partial_{\vec{\theta}} \vec{r}_{\vec{\theta}} \right)_{ij} =\partial (r_{\vec{\theta}})_i / \partial \theta_j $ and highlighted the dependence on the state through $\vec{r}_{\vec{\theta}}$ and $\sigma$ and on the measurements through $\sigma_{\mathrm{m}}$.
The QFIM for a Gaussian shift model has exactly the same form as the FIM for a classical Gaussian distribution~\cite{Monras2013,serafini2017quantum}
\begin{equation}
\label{eq:gaussianQFIM}
J(\vec{r}_{\vec{\theta}}, \sigma ) \equiv 2 \left( \partial_{\vec{\theta}} \vec{r}_{\vec{\theta}} \right)^\T \sigma^{-1} \left( \partial_{\vec{\theta}} \vec{r}_{\vec{\theta}} \right),
\end{equation}
where we highlighted the dependence on $\vec{r}_{\vec{\theta}}$ and $\sigma$ only.

Considering the measurement with the same CM as the state, that is ${\sigma + \sigma_{\text{m}} = 2 \sigma}$, we get our third result
\begin{equation}
\label{eq:GaussianCMfisher}
F  ( \vec{r}_{\vec{\theta}}, \sigma, \sigma ) = \frac{1}{2} J(\vec{r}_{\vec{\theta}}, \sigma ).
\end{equation}
Thus, the scalar CRB obtained from the (pseudo-)inverse classical Fisher information matrix is indeed twice the generalized Helstrom CRB $\tr W (\partial_{\vec{\theta}} \vec{\beta})^\top F ( \vec{r}_{\vec{\theta}}, \sigma, \sigma )^{+} (\partial_{\vec{\theta}} \vec{\beta}) = 2 C^{\mathrm{GS}}$.
This halving of the available information manifests the additional noise of measuring complementary observables simultaneously as suggested by Arthurs and Kelly~\cite{Arthurs1965,Stenholm1992,Raymer1994}.

\section{Discussion}

The theory of quantum local asymptotic normality (QLAN)~\cite{Hayashi2006a,Hayashi2008a,Guta2007,Guta2007,Kahn2009,Yamagata2013,Yang2018a,Fujiwara2018} maps an asymptotically large number of identical copies of a finite-dimensional quantum statistical model to a Gaussian shift model\footnote{This has been used to show the asymptotic attainability of the Holevo CRB, although the precise technical assumptions differ in the various formulations.}.
Since for such models there always exists a POVM with a FIM equal to one-half of the QFIM as shown by~\eqref{eq:GaussianCMfisher}, we have $C^\mathrm{H} \leq 2 C^\mathrm{GS}$, in consistence with~\eqref{sandwich}.

This result strongly suggests that for any quantum statistical model there will be a sequence $(M_n,\vec{\check \beta}_n)$ of measurements and classical estimators on $n$ copies (not necessarily unbiased for finite $n$) such that
\begin{equation} \lim_{n \to \infty} n \Sigma\left( \rho_{\vec{\theta}}^{\otimes n}, M_n,\vec{\check \beta}_n \right) = 2 (\partial_{\vec{\theta}} \vec{\beta})^\top J^{+}(\partial_{\vec{\theta}} \vec{\beta})
\end{equation}
However, due to the technical assumptions underlying QLAN, our suggestion does not constitute a rigorous proof, for which one would need to present one such sequence.
We stress that, on the contrary, the derivation of the scalar inequality $C^\text{H} \leq 2 C^\text{GS}$ is based purely on the \emph{evaluation} of the Holevo CRB, which is a well-defined ``single letter'' calculation, regardless of asymptotics.


\section*{Acknowledgments}
We thank M.~G.~Genoni both for several fruitful discussions and for making us aware of Ref.~\cite{Carollo2019}.
We are grateful to D.~Branford, R.~Demkowicz-Dobrzański, J.~F.~Friel, W.~Górecki and J.~Suzuki for useful discussions.
AD and FA have been supported by the UK EPSRC (EP/K04057X/2) and the UK National Quantum Technologies Programme (EP/M01326X/1, EP/M013243/1).
FA also acknowledges financial support from the National Science Center (Poland) grant No. 2016/22/E/ST2/00559.
MT is supported by the Singapore National Research Foundation under Project No. QEP-P7.

\appendix
\section{Proofs omitted in the main text}

\subsection{Proof of Lemma~\ref{lem_V}}
\label{app:lem_V}
\begin{proof}
Let $f(x)$ be a real function and $F = \int f(x) M(dx)$
be a Hermitian operator. Consider the inequality
\begin{align}
\int \Bk{f(x)-F} M(dx)\Bk{f(x)-F} &\succeq 0,
\end{align}
which leads to
\begin{align}
\int \Bk{f(x)}^2 M(dx)&\succeq F^2.
\end{align}
Let $u$ be an arbitrary real vector.
Substituting
\begin{align}
f(x) &= \mathbf{u}^\top \Bk{\vec{\check\beta}(x)-\vec{\beta}}
= \sum_s u_s\Bk{\check\beta_s(x)-\beta_s}
\end{align}
gives
\begin{align}
\mathbf{u}^\top 
\BK{\Bk{\vec{\check\beta}(x)-\vec{\beta}}\Bk{\vec{\check\beta}(x)-\vec{\beta}}^\top M(dx)}\mathbf{u}
&\ge (\mathbf{u}^\top \vec{X})^2.
\label{ineq}
\end{align}
Since $(\mathbf{u}^\top \vec{X})^2$ is Hermitian,
\begin{align}
(\mathbf{u}^\top\vec{X})^2 &= 
\sum_{s,t} u_s u_t \vec{X}_s \vec{X}_t = 
\bk{\sum_{s,t} u_s u_t \vec{X}_s \vec{X}_t}^\dagger
\nonumber\\
&= \sum_{s,t} u_s u_t \vec{X}_t \vec{X}_s  = 
\sum_{s,t} u_s u_t \vec{X}_s \circ \vec{X}_t = \mathbf{u}^\top(\vec{X}\circ \vec{X}^\top) \mathbf{u}.
\end{align}
Multiplying both sides of Eq.~(\ref{ineq}) by $\rho_{\vec{\theta}}$ and taking the
operator trace leads to the lemma.
\end{proof}

\subsection{Proof of Theorem~\ref{lem_nonemptyInfluence}}
\label{app:lem_nonemptyInfluence}
\begin{proof}
We prove the forward implication first.
We assume that there exists a $q$-dimensional vector $\vec{X}$ of influence operators  satisfying the local unbiasedness conditions
\begin{align}\label{eq:constr}
\trace \rho_{\vec{\theta}} L_j \circ X_s = \bk{\partial_{\vec{\theta}}\vec{\beta}}_{js}.
\end{align}
Let $\mathbf{u} \in \mathbb{R}^p$ and $\mathbf{v} \in \mathbb{R}^q$ be two arbitrary vectors.
From~\eqref{eq:constr} we have the scalar equality
\begin{equation}
\mathbf{v}^\top \left( \trace \rho_{\vec{\theta}} \vec{L} \circ \vec{X}^\top \right) \mathbf{u} = 
\trace \rho_{\vec{\theta}} \left( \mathbf{v}^\top  \vec{L} \right) \circ \left( \vec{X}^\top \mathbf{u} \right) = \mathbf{v}^\top ( \partial_{\vec{\theta}} \vec{\beta} ) \mathbf{u} .
\end{equation}
Introducing operators $V = \mathbf{v}^\top \vec{L}$ and $U = \mathbf{u}^\top \vec{X}$ we can use the Cauchy-Schwarz inequality for the SLD inner product:
\begin{equation}
| \trace \rho_{\vec{\theta}} V \circ U|^2 \leq \bk{\trace \rho_{\vec{\theta}} V^2}  \bk{\trace \rho_{\vec{\theta}} U^2},
\end{equation}
which can be rewritten as
\begin{align}
\left[ \mathbf{v}^\top \bk{\partial_{\vec{\theta}} \vec{\beta}} \mathbf{u} \right]^2 \leq & \left[ \mathbf{v}^\top \bk{\trace \rho_{\vec{\theta}} \vec{L} \circ \vec{L}^\top} \mathbf{v} \right] \left[ \mathbf{u}^\top \bk{\trace \rho_{\vec{\theta}} \vec{X} \circ \vec{X}^\top} \mathbf{u} \right] \\
& = \left( \mathbf{v}^\top J \mathbf{v} \right) \left[ \mathbf{u}^\top V(\vec{X}) \mathbf{u} \right] \notag
\end{align}
Now let $\mathbf{k}$ be a vector in the kernel of $J$ (such that $J \mathbf{k} = 0$) and choose $\mathbf{v}=\mathbf{k}$ and $\mathbf{u} = (\partial_{\vec{\theta}} \vec{\beta})^\top \mathbf{k}$ to get
\begin{equation}
0 \leq \left[ \mathbf{k}^\top (\partial_{\vec{\theta}} \vec{\beta})(\partial_{\vec{\theta}} \vec{\beta})^\top \mathbf{k} \right]^2 \leq 0,
\end{equation}
which implies $(\partial_{\vec{\theta}} \vec{\beta})^\top \mathbf{k}=0$.
Since this holds for any vector in the kernel of $J$, the columns of $\partial_{\vec{\theta}} \vec{\beta}$ must belong to the range of $J$, i.e. the orthogonal complement of the kernel.
In turn this implies $J J^{+} \partial \beta = \partial \beta$, since $J J^{+}$ is the orthogonal projector on the range of $J$~\cite{Banerjee1973}.

The reverse implication follows from the fact that $\vec{X}_0 \equiv (\partial_{\vec{\theta}} \vec{\beta})^\top J^{+} \vec{L} \in \mathcal{X}_{\vec{\theta}}$ when $J J^{+} \partial_{\vec{\theta}} \vec{\beta} = \partial_{\vec{\theta}} \vec{\beta}$, since
\begin{equation}
\Tr \rho_{\vec{\theta}} \vec{L} \circ \vec{X}_0^\top = \bk{\Tr \rho_{\vec{\theta}} \vec{L} \circ \vec{L}^\top} J^{+} \partial_{\vec{\theta}} \vec{\beta} = J J^{+} \partial_{\vec{\theta}} \vec{\beta}.
\end{equation}
\end{proof}

\subsection{Proof of Corollary~\ref{cor_hb}}
\label{app:cor_hb}
\begin{proof}
  $J\equiv V(\vec{L})$ is the Gram matrix of $\{\vec{L}\}$ with respect to the SLD inner product~\eqref{inner}.
  If $J$ is full-rank, $\{\vec{L}\}$ is linearly independent, and the formula for the projection $\Pi(\vec{X}|\mathcal T)$ is given by $\bk{\partial_{\vec{\theta}}\vec{\beta}}^\top J^{-1}\vec{L}$~\cite{tsiatis06,bickel93}.
  If $J$ is not full-rank and $\{\vec{L}\}$ is linearly dependent the same formula can be generalized by taking the Moore-Penrose pseudoinverse as follows.

  We introduce the linear map $\Lambda : \mathbbm{R}^p \mapsto \mathcal{L}_{\mathsf{h}}(\mathsf{H})$ that maps each element $e_i$ of the canonical basis of $\mathbbm{R}^p$ to the SLD operator $L_i$, i.e.
  \begin{equation} 
  \Lambda ( \vec{x} ) = \vec{L}^\T \vec{x};
  \end{equation}
  clearly, $\spn\{ \vec{L} \}=\cspn\{ \vec{L} \}$ is the image of $\Lambda$.
  The projector on this finite-dimensional subspace is given by $\Lambda \Lambda^{+}$, where $\Lambda^{+}$ denotes the Moore-Penrose pseudoinverse of the linear map~\cite[p.~163]{luenberger1997optimization}.
  In particular we can rewrite the pseudoinverse as $\Lambda^{+}=(\Lambda^\dag \Lambda)^{+} \Lambda^\dag$, where $\Lambda^\dag : \mathcal{L}_{\mathsf{h}}(\mathsf{H}) \mapsto \mathbbm{R}^p$ denotes the adjoint map w.r.t. the SLD inner product
  \begin{equation}
	\Lambda^\dag (X) = ( \langle X , L_1 \rangle, \dots, \langle X , L_p \rangle )^\top = \Tr \rho_{\vec{\theta}} X \circ \vec{L} .
  \end{equation}
  Since $\Lambda^\dag \Lambda=V(\vec{L})=J$ is the Gram matrix of the SLDs, the projection of a single operator $X_s$ is
  \begin{equation}
  \Pi(X_s|\mathcal{T}) = L_i (J^{+})_{ij}  \langle X_s , L_j \rangle.
  \end{equation}
  By applying the projector to all the elements of the replicating space we can write the vectoral form 
  \begin{equation}
  \Pi( \vec{\vec{X}}|\mathcal{T}) =  ( \Tr \rho_{\vec{\theta}} \vec{\vec{X}} \circ \vec{L}^\top ) J^{+} \vec{L} = (\partial_{\vec{\theta}} \vec{\beta})^\top J^{+} \vec{L};
  \end{equation}
  where we inserted the local unbiasedness conditions~\eqref{eq:locunbvec} to recover~\eqref{proj}, from which~\eqref{Veff} and~\eqref{CS} follow immediately. 
\end{proof}

\subsection{Proof of Lemma~\ref{lem_Z}}
\label{app:lem_Z}
\begin{proof}
  Let $f(x)$ be a complex function and $F = \int f(x) M(dx)$. Consider
  the inequality
\begin{align}
\int \Bk{f(x)-F}M(dx)\Bk{f^*(x)-F^\dagger} &\succeq 0,
\end{align}
which leads to 
\begin{align}
\int \abs{f(x)}^2 M(dx) 
\succeq  F F^\dagger.
\label{ineq2}
\end{align}
Now let $\mathbf{z}$ be an arbitrary complex vector and
\begin{align}
f(x) &= \sum_s z_s^*\Bk{\check\beta_s(x)-\beta_s},
&
F &= \sum_s z_s^* X_s.
\end{align}
Equation~(\ref{ineq2}) leads to
\begin{align}
\mathbf{z}^\dagger
\BK{\int \Bk{\vec{\check\beta}(x)-\vec{\beta}} 
\Bk{\vec{\check\beta}(x)- \vec{\beta}}^\top M(dx)}
\mathbf{z}
&\ge 
\mathbf{z}^\dagger \vec{X} \vec{X}^\top \mathbf{z}.
\end{align}
Multiplying both sides by $\rho_{\vec{\theta}}$ and taking the operator trace
leads to the lemma.
\end{proof}

\subsection{Proof of Lemma~\ref{lem_bg} (Belavkin-Grishanin inequality)}
\label{app:lem_bg}
\begin{proof}
  It is easy to prove that $A^* = \real A - i\imag A \succeq 0$,
  $\real A = (A+A^*)/2 \succeq 0$, $\real A$ and $i\imag A$ are Hermitian,
  and $\imag A$ is skew-symmetric.  Then, for any complex vector $\mathbf{z}$,
\begin{align}
\real A &\succeq \pm i\imag A,
&
\mathbf{z}^\dagger\bk{\real A} \mathbf{z} &\ge \abs{\mathbf{z}^\dagger \bk{i\imag A} \mathbf{z}}.
\end{align}
Let $\{\lambda_s\}$ be the eigenvalues of $i\imag A$
and $\{\mathbf{z}_s\}$ be the corresponding unit eigenvectors.
Then
\begin{align}
\trace \real A &= \sum_s \mathbf{z}_s^\dagger \bk{\real A} \mathbf{z}_s
\ge \sum_s \abs{ \mathbf{z}_s^\dagger \bk{i\imag A} \mathbf{z}_s}
\nonumber\\
&= \sum_s \abs{\lambda_s} = \norm{i\imag A}_1 = \norm{\imag A}_1.
\end{align}
\end{proof}

\section{Added note on a related work}\label{app:Carollo}
While completing this work we became aware of the recently published work of Carollo et. al~\cite{Carollo2019}, where the authors derive the inequality $C^\mathrm{H} \leq 2 C^\mathrm{S}$ using a different intermediate bound.
While the final result is correct, the paper~\cite{Carollo2019} contains a small mistake that makes the proof incomplete for generic weight matrices $W$; here we show how to fix this.

The derivation of~\cite{Carollo2019} starts from a wrong expression of the second term appearing in~\eqref{hcrb}: $\Vert W \Im Z(\vec{X}) \Vert_1$ instead of $\Vert \sqrt{W} \Im Z(\vec{X})  \sqrt{W} \Vert_1$\footnote{
This misunderstanding appeared also in some previous works~\cite{Genoni2013b,Ragy2019}, but since the matrix $W=\id$ is then considered the results are valid (it is enough that $W$ commutes with $A$).}.
However, the derived upper bound is still valid, since for a skew-symmetric $A$ we have the following inequality
\begin{equation} \label{eq:tracenormIneq}
\left\Vert \sqrt{W} A \sqrt{W} \right\Vert_1 \leq \left\Vert W A \right\Vert_1 .
\end{equation}
This comes from the fact that the matrices $M_1 = W A$ and $M_2 = \sqrt{W} A \sqrt{W}$ are isospectral, but only $M_2$ is normal (being skew-symmetric).
For a normal matrix the singular values are equal to the absolute values of the eigenvalues.
However, for a non-normal matrix such as $M_1$ the ordered vector containing the absolute values of the eigenvalues is majorized by the ordered vector of singular values~\cite{Bhatia1997}; in turn, this fact implies~\eqref{eq:tracenormIneq}.

\bibliography{HCRBvsSLDCRB}

\begin{thebibliography}{66}%
\makeatletter
\providecommand \@ifxundefined [1]{%
 \@ifx{#1\undefined}
}%
\providecommand \@ifnum [1]{%
 \ifnum #1\expandafter \@firstoftwo
 \else \expandafter \@secondoftwo
 \fi
}%
\providecommand \@ifx [1]{%
 \ifx #1\expandafter \@firstoftwo
 \else \expandafter \@secondoftwo
 \fi
}%
\providecommand \natexlab [1]{#1}%
\providecommand \enquote  [1]{``#1''}%
\providecommand \bibnamefont  [1]{#1}%
\providecommand \bibfnamefont [1]{#1}%
\providecommand \citenamefont [1]{#1}%
\providecommand \href@noop [0]{\@secondoftwo}%
\providecommand \href [0]{\begingroup \@sanitize@url \@href}%
\providecommand \@href[1]{\@@startlink{#1}\@@href}%
\providecommand \@@href[1]{\endgroup#1\@@endlink}%
\providecommand \@sanitize@url [0]{\catcode `\\12\catcode `\$12\catcode
  `\&12\catcode `\#12\catcode `\^12\catcode `\_12\catcode `\%12\relax}%
\providecommand \@@startlink[1]{}%
\providecommand \@@endlink[0]{}%
\providecommand \url  [0]{\begingroup\@sanitize@url \@url }%
\providecommand \@url [1]{\endgroup\@href {#1}{\urlprefix }}%
\providecommand \urlprefix  [0]{URL }%
\providecommand \Eprint [0]{\href }%
\providecommand \doibase [0]{http://dx.doi.org/}%
\providecommand \selectlanguage [0]{\@gobble}%
\providecommand \bibinfo  [0]{\@secondoftwo}%
\providecommand \bibfield  [0]{\@secondoftwo}%
\providecommand \translation [1]{[#1]}%
\providecommand \BibitemOpen [0]{}%
\providecommand \bibitemStop [0]{}%
\providecommand \bibitemNoStop [0]{.\EOS\space}%
\providecommand \EOS [0]{\spacefactor3000\relax}%
\providecommand \BibitemShut  [1]{\csname bibitem#1\endcsname}%
\let\auto@bib@innerbib\@empty
\bibitem [{\citenamefont {Humphreys}\ \emph {et~al.}(2013)\citenamefont
  {Humphreys}, \citenamefont {Barbieri}, \citenamefont {Datta},\ and\
  \citenamefont {Walmsley}}]{Humphreys2013}%
  \BibitemOpen
  \bibfield  {author} {\bibinfo {author} {\bibfnamefont {P.~C.}\ \bibnamefont
  {Humphreys}}, \bibinfo {author} {\bibfnamefont {M.}~\bibnamefont {Barbieri}},
  \bibinfo {author} {\bibfnamefont {A.}~\bibnamefont {Datta}}, \ and\ \bibinfo
  {author} {\bibfnamefont {I.~A.}\ \bibnamefont {Walmsley}},\ }\href {\doibase
  10.1103/PhysRevLett.111.070403} {\bibfield  {journal} {\bibinfo  {journal}
  {Phys. Rev. Lett.}\ }\textbf {\bibinfo {volume} {111}},\ \bibinfo {pages}
  {070403} (\bibinfo {year} {2013})}\BibitemShut {NoStop}%
\bibitem [{\citenamefont {Vidrighin}\ \emph {et~al.}(2014)\citenamefont
  {Vidrighin}, \citenamefont {Donati}, \citenamefont {Genoni}, \citenamefont
  {Jin}, \citenamefont {Kolthammer}, \citenamefont {Kim}, \citenamefont
  {Datta}, \citenamefont {Barbieri},\ and\ \citenamefont
  {Walmsley}}]{Vidrighin2014}%
  \BibitemOpen
  \bibfield  {author} {\bibinfo {author} {\bibfnamefont {M.~D.}\ \bibnamefont
  {Vidrighin}}, \bibinfo {author} {\bibfnamefont {G.}~\bibnamefont {Donati}},
  \bibinfo {author} {\bibfnamefont {M.~G.}\ \bibnamefont {Genoni}}, \bibinfo
  {author} {\bibfnamefont {X.-M.}\ \bibnamefont {Jin}}, \bibinfo {author}
  {\bibfnamefont {W.~S.}\ \bibnamefont {Kolthammer}}, \bibinfo {author}
  {\bibfnamefont {M.~S.}\ \bibnamefont {Kim}}, \bibinfo {author} {\bibfnamefont
  {A.}~\bibnamefont {Datta}}, \bibinfo {author} {\bibfnamefont
  {M.}~\bibnamefont {Barbieri}}, \ and\ \bibinfo {author} {\bibfnamefont
  {I.~A.}\ \bibnamefont {Walmsley}},\ }\href {\doibase 10.1038/ncomms4532}
  {\bibfield  {journal} {\bibinfo  {journal} {Nat. Commun.}\ }\textbf {\bibinfo
  {volume} {5}},\ \bibinfo {pages} {3532} (\bibinfo {year} {2014})}\BibitemShut
  {NoStop}%
\bibitem [{\citenamefont {Baumgratz}\ and\ \citenamefont
  {Datta}(2016)}]{Baumgratz2015}%
  \BibitemOpen
  \bibfield  {author} {\bibinfo {author} {\bibfnamefont {T.}~\bibnamefont
  {Baumgratz}}\ and\ \bibinfo {author} {\bibfnamefont {A.}~\bibnamefont
  {Datta}},\ }\href {\doibase 10.1103/PhysRevLett.116.030801} {\bibfield
  {journal} {\bibinfo  {journal} {Phys. Rev. Lett.}\ }\textbf {\bibinfo
  {volume} {116}},\ \bibinfo {pages} {030801} (\bibinfo {year}
  {2016})}\BibitemShut {NoStop}%
\bibitem [{\citenamefont {Szczykulska}\ \emph {et~al.}(2016)\citenamefont
  {Szczykulska}, \citenamefont {Baumgratz},\ and\ \citenamefont
  {Datta}}]{Szczykulska2016}%
  \BibitemOpen
  \bibfield  {author} {\bibinfo {author} {\bibfnamefont {M.}~\bibnamefont
  {Szczykulska}}, \bibinfo {author} {\bibfnamefont {T.}~\bibnamefont
  {Baumgratz}}, \ and\ \bibinfo {author} {\bibfnamefont {A.}~\bibnamefont
  {Datta}},\ }\href {\doibase 10.1080/23746149.2016.1230476} {\bibfield
  {journal} {\bibinfo  {journal} {Adv. Phys. X}\ }\textbf {\bibinfo {volume}
  {1}},\ \bibinfo {pages} {621} (\bibinfo {year} {2016})}\BibitemShut {NoStop}%
\bibitem [{\citenamefont {Gagatsos}\ \emph {et~al.}(2016)\citenamefont
  {Gagatsos}, \citenamefont {Branford},\ and\ \citenamefont
  {Datta}}]{Gagatsos2016a}%
  \BibitemOpen
  \bibfield  {author} {\bibinfo {author} {\bibfnamefont {C.~N.}\ \bibnamefont
  {Gagatsos}}, \bibinfo {author} {\bibfnamefont {D.}~\bibnamefont {Branford}},
  \ and\ \bibinfo {author} {\bibfnamefont {A.}~\bibnamefont {Datta}},\ }\href
  {\doibase 10.1103/PhysRevA.94.042342} {\bibfield  {journal} {\bibinfo
  {journal} {Phys. Rev. A}\ }\textbf {\bibinfo {volume} {94}},\ \bibinfo
  {pages} {042342} (\bibinfo {year} {2016})}\BibitemShut {NoStop}%
\bibitem [{\citenamefont {Chrostowski}\ \emph {et~al.}(2017)\citenamefont
  {Chrostowski}, \citenamefont {Demkowicz-Dobrza{\'{n}}ski}, \citenamefont
  {Jarzyna},\ and\ \citenamefont {Banaszek}}]{Chrostowski2017}%
  \BibitemOpen
  \bibfield  {author} {\bibinfo {author} {\bibfnamefont {A.}~\bibnamefont
  {Chrostowski}}, \bibinfo {author} {\bibfnamefont {R.}~\bibnamefont
  {Demkowicz-Dobrza{\'{n}}ski}}, \bibinfo {author} {\bibfnamefont
  {M.}~\bibnamefont {Jarzyna}}, \ and\ \bibinfo {author} {\bibfnamefont
  {K.}~\bibnamefont {Banaszek}},\ }\href {\doibase 10.1142/S0219749917400056}
  {\bibfield  {journal} {\bibinfo  {journal} {Int. J. Quantum Inf.}\ }\textbf
  {\bibinfo {volume} {15}},\ \bibinfo {pages} {1740005} (\bibinfo {year}
  {2017})}\BibitemShut {NoStop}%
\bibitem [{\citenamefont {Pezz{\`{e}}}\ \emph {et~al.}(2017)\citenamefont
  {Pezz{\`{e}}}, \citenamefont {Ciampini}, \citenamefont {Spagnolo},
  \citenamefont {Humphreys}, \citenamefont {Datta}, \citenamefont {Walmsley},
  \citenamefont {Barbieri}, \citenamefont {Sciarrino},\ and\ \citenamefont
  {Smerzi}}]{Pezze2017}%
  \BibitemOpen
  \bibfield  {author} {\bibinfo {author} {\bibfnamefont {L.}~\bibnamefont
  {Pezz{\`{e}}}}, \bibinfo {author} {\bibfnamefont {M.~A.}\ \bibnamefont
  {Ciampini}}, \bibinfo {author} {\bibfnamefont {N.}~\bibnamefont {Spagnolo}},
  \bibinfo {author} {\bibfnamefont {P.~C.}\ \bibnamefont {Humphreys}}, \bibinfo
  {author} {\bibfnamefont {A.}~\bibnamefont {Datta}}, \bibinfo {author}
  {\bibfnamefont {I.~A.}\ \bibnamefont {Walmsley}}, \bibinfo {author}
  {\bibfnamefont {M.}~\bibnamefont {Barbieri}}, \bibinfo {author}
  {\bibfnamefont {F.}~\bibnamefont {Sciarrino}}, \ and\ \bibinfo {author}
  {\bibfnamefont {A.}~\bibnamefont {Smerzi}},\ }\href {\doibase
  10.1103/PhysRevLett.119.130504} {\bibfield  {journal} {\bibinfo  {journal}
  {Phys. Rev. Lett.}\ }\textbf {\bibinfo {volume} {119}},\ \bibinfo {pages}
  {130504} (\bibinfo {year} {2017})}\BibitemShut {NoStop}%
\bibitem [{\citenamefont {Roccia}\ \emph
  {et~al.}(2018{\natexlab{a}})\citenamefont {Roccia}, \citenamefont {Gianani},
  \citenamefont {Mancino}, \citenamefont {Sbroscia}, \citenamefont {Somma},
  \citenamefont {Genoni},\ and\ \citenamefont {Barbieri}}]{Roccia2017}%
  \BibitemOpen
  \bibfield  {author} {\bibinfo {author} {\bibfnamefont {E.}~\bibnamefont
  {Roccia}}, \bibinfo {author} {\bibfnamefont {I.}~\bibnamefont {Gianani}},
  \bibinfo {author} {\bibfnamefont {L.}~\bibnamefont {Mancino}}, \bibinfo
  {author} {\bibfnamefont {M.}~\bibnamefont {Sbroscia}}, \bibinfo {author}
  {\bibfnamefont {F.}~\bibnamefont {Somma}}, \bibinfo {author} {\bibfnamefont
  {M.~G.}\ \bibnamefont {Genoni}}, \ and\ \bibinfo {author} {\bibfnamefont
  {M.}~\bibnamefont {Barbieri}},\ }\href {\doibase 10.1088/2058-9565/aa9212}
  {\bibfield  {journal} {\bibinfo  {journal} {Quantum Sci. Technol.}\ }\textbf
  {\bibinfo {volume} {3}},\ \bibinfo {pages} {01LT01} (\bibinfo {year}
  {2018}{\natexlab{a}})}\BibitemShut {NoStop}%
\bibitem [{\citenamefont {Roccia}\ \emph
  {et~al.}(2018{\natexlab{b}})\citenamefont {Roccia}, \citenamefont {Cimini},
  \citenamefont {Sbroscia}, \citenamefont {Gianani}, \citenamefont {Ruggiero},
  \citenamefont {Mancino}, \citenamefont {Genoni}, \citenamefont {Ricci},\ and\
  \citenamefont {Barbieri}}]{Roccia2018}%
  \BibitemOpen
  \bibfield  {author} {\bibinfo {author} {\bibfnamefont {E.}~\bibnamefont
  {Roccia}}, \bibinfo {author} {\bibfnamefont {V.}~\bibnamefont {Cimini}},
  \bibinfo {author} {\bibfnamefont {M.}~\bibnamefont {Sbroscia}}, \bibinfo
  {author} {\bibfnamefont {I.}~\bibnamefont {Gianani}}, \bibinfo {author}
  {\bibfnamefont {L.}~\bibnamefont {Ruggiero}}, \bibinfo {author}
  {\bibfnamefont {L.}~\bibnamefont {Mancino}}, \bibinfo {author} {\bibfnamefont
  {M.~G.}\ \bibnamefont {Genoni}}, \bibinfo {author} {\bibfnamefont {M.~A.}\
  \bibnamefont {Ricci}}, \ and\ \bibinfo {author} {\bibfnamefont
  {M.}~\bibnamefont {Barbieri}},\ }\href {\doibase 10.1364/OPTICA.5.001171}
  {\bibfield  {journal} {\bibinfo  {journal} {Optica}\ }\textbf {\bibinfo
  {volume} {5}},\ \bibinfo {pages} {1171} (\bibinfo {year}
  {2018}{\natexlab{b}})}\BibitemShut {NoStop}%
\bibitem [{\citenamefont {Gessner}\ \emph {et~al.}(2018)\citenamefont
  {Gessner}, \citenamefont {Pezz{\`{e}}},\ and\ \citenamefont
  {Smerzi}}]{Gessner2018}%
  \BibitemOpen
  \bibfield  {author} {\bibinfo {author} {\bibfnamefont {M.}~\bibnamefont
  {Gessner}}, \bibinfo {author} {\bibfnamefont {L.}~\bibnamefont
  {Pezz{\`{e}}}}, \ and\ \bibinfo {author} {\bibfnamefont {A.}~\bibnamefont
  {Smerzi}},\ }\href {\doibase 10.1103/PhysRevLett.121.130503} {\bibfield
  {journal} {\bibinfo  {journal} {Phys. Rev. Lett.}\ }\textbf {\bibinfo
  {volume} {121}},\ \bibinfo {pages} {130503} (\bibinfo {year}
  {2018})}\BibitemShut {NoStop}%
\bibitem [{\citenamefont {Yang}\ \emph
  {et~al.}(2019{\natexlab{a}})\citenamefont {Yang}, \citenamefont {Pang},
  \citenamefont {Zhou},\ and\ \citenamefont {Jordan}}]{Yang2018b}%
  \BibitemOpen
  \bibfield  {author} {\bibinfo {author} {\bibfnamefont {J.}~\bibnamefont
  {Yang}}, \bibinfo {author} {\bibfnamefont {S.}~\bibnamefont {Pang}}, \bibinfo
  {author} {\bibfnamefont {Y.}~\bibnamefont {Zhou}}, \ and\ \bibinfo {author}
  {\bibfnamefont {A.~N.}\ \bibnamefont {Jordan}},\ }\href {\doibase
  10.1103/PhysRevA.100.032104} {\bibfield  {journal} {\bibinfo  {journal}
  {Phys. Rev. A}\ }\textbf {\bibinfo {volume} {100}},\ \bibinfo {pages}
  {032104} (\bibinfo {year} {2019}{\natexlab{a}})}\BibitemShut {NoStop}%
\bibitem [{\citenamefont {Genoni}\ and\ \citenamefont
  {Tufarelli}(2019)}]{Genoni2019}%
  \BibitemOpen
  \bibfield  {author} {\bibinfo {author} {\bibfnamefont {M.~G.}\ \bibnamefont
  {Genoni}}\ and\ \bibinfo {author} {\bibfnamefont {T.}~\bibnamefont
  {Tufarelli}},\ }\href {\doibase 10.1088/1751-8121/ab3fe0} {\bibfield
  {journal} {\bibinfo  {journal} {J. Phys. A Math. Theor.}\ }\textbf {\bibinfo
  {volume} {52}},\ \bibinfo {pages} {434002} (\bibinfo {year}
  {2019})}\BibitemShut {NoStop}%
\bibitem [{\citenamefont {Gorecki}\ \emph {et~al.}(2019)\citenamefont
  {Gorecki}, \citenamefont {Zhou}, \citenamefont {Jiang},\ and\ \citenamefont
  {Demkowicz-Dobrza{\'{n}}ski}}]{Gorecki2019}%
  \BibitemOpen
  \bibfield  {author} {\bibinfo {author} {\bibfnamefont {W.}~\bibnamefont
  {Gorecki}}, \bibinfo {author} {\bibfnamefont {S.}~\bibnamefont {Zhou}},
  \bibinfo {author} {\bibfnamefont {L.}~\bibnamefont {Jiang}}, \ and\ \bibinfo
  {author} {\bibfnamefont {R.}~\bibnamefont {Demkowicz-Dobrza{\'{n}}ski}},\
  }\href {https://arxiv.org/abs/1901.00896} {\bibfield  {journal} {\bibinfo
  {journal} {arXiv:1901.00896}\ } (\bibinfo {year} {2019})}\BibitemShut
  {NoStop}%
\bibitem [{\citenamefont {Rubio}\ and\ \citenamefont
  {Dunningham}(2019)}]{Rubio2019}%
  \BibitemOpen
  \bibfield  {author} {\bibinfo {author} {\bibfnamefont {J.}~\bibnamefont
  {Rubio}}\ and\ \bibinfo {author} {\bibfnamefont {J.}~\bibnamefont
  {Dunningham}},\ }\href {http://arxiv.org/abs/1906.04123} {\bibfield
  {journal} {\bibinfo  {journal} {arXiv:1906.04123}\ } (\bibinfo {year}
  {2019})}\BibitemShut {NoStop}%
\bibitem [{\citenamefont {Bisketzi}\ \emph {et~al.}(2019)\citenamefont
  {Bisketzi}, \citenamefont {Branford},\ and\ \citenamefont
  {Datta}}]{Bisketzi2019}%
  \BibitemOpen
  \bibfield  {author} {\bibinfo {author} {\bibfnamefont {E.}~\bibnamefont
  {Bisketzi}}, \bibinfo {author} {\bibfnamefont {D.}~\bibnamefont {Branford}},
  \ and\ \bibinfo {author} {\bibfnamefont {A.}~\bibnamefont {Datta}},\ }\href
  {\doibase 10.1088/1367-2630/ab58a0} {\bibfield  {journal} {\bibinfo
  {journal} {New J. Phys.}\ }\textbf {\bibinfo {volume} {21}},\ \bibinfo
  {pages} {123032} (\bibinfo {year} {2019})}\BibitemShut {NoStop}%
\bibitem [{\citenamefont {Liu}\ \emph {et~al.}(2019)\citenamefont {Liu},
  \citenamefont {Yuan}, \citenamefont {Lu},\ and\ \citenamefont
  {Wang}}]{Liu2019d}%
  \BibitemOpen
  \bibfield  {author} {\bibinfo {author} {\bibfnamefont {J.}~\bibnamefont
  {Liu}}, \bibinfo {author} {\bibfnamefont {H.}~\bibnamefont {Yuan}}, \bibinfo
  {author} {\bibfnamefont {X.-m.}\ \bibnamefont {Lu}}, \ and\ \bibinfo {author}
  {\bibfnamefont {X.}~\bibnamefont {Wang}},\ }\href
  {http://arxiv.org/abs/1907.08037} {\bibfield  {journal} {\bibinfo  {journal}
  {arXiv:1907.08037}\ } (\bibinfo {year} {2019})}\BibitemShut {NoStop}%
\bibitem [{\citenamefont {Lu}\ \emph {et~al.}(2019)\citenamefont {Lu},
  \citenamefont {Ma},\ and\ \citenamefont {Zhang}}]{Lu2019}%
  \BibitemOpen
  \bibfield  {author} {\bibinfo {author} {\bibfnamefont {X.-M.}\ \bibnamefont
  {Lu}}, \bibinfo {author} {\bibfnamefont {Z.}~\bibnamefont {Ma}}, \ and\
  \bibinfo {author} {\bibfnamefont {C.}~\bibnamefont {Zhang}},\ }\href
  {http://arxiv.org/abs/1910.06035} {\bibfield  {journal} {\bibinfo  {journal}
  {arXiv:1910.06035}\ } (\bibinfo {year} {2019})}\BibitemShut {NoStop}%
\bibitem [{\citenamefont {Albarelli}\ \emph
  {et~al.}(2019{\natexlab{a}})\citenamefont {Albarelli}, \citenamefont
  {Barbieri}, \citenamefont {Genoni},\ and\ \citenamefont
  {Gianani}}]{Albarelli2019c}%
  \BibitemOpen
  \bibfield  {author} {\bibinfo {author} {\bibfnamefont {F.}~\bibnamefont
  {Albarelli}}, \bibinfo {author} {\bibfnamefont {M.}~\bibnamefont {Barbieri}},
  \bibinfo {author} {\bibfnamefont {M.~G.}\ \bibnamefont {Genoni}}, \ and\
  \bibinfo {author} {\bibfnamefont {I.}~\bibnamefont {Gianani}},\ }\href
  {http://arxiv.org/abs/1911.12067} {\bibfield  {journal} {\bibinfo  {journal}
  {arXiv:1911.12067}\ } (\bibinfo {year} {2019}{\natexlab{a}})}\BibitemShut
  {NoStop}%
\bibitem [{\citenamefont {Helstrom}(1967)}]{Helstrom1967}%
  \BibitemOpen
  \bibfield  {author} {\bibinfo {author} {\bibfnamefont {C.~W.}\ \bibnamefont
  {Helstrom}},\ }\href {\doibase 10.1016/0375-9601(67)90366-0} {\bibfield
  {journal} {\bibinfo  {journal} {Phys. Lett. A}\ }\textbf {\bibinfo {volume}
  {25}},\ \bibinfo {pages} {101} (\bibinfo {year} {1967})}\BibitemShut
  {NoStop}%
\bibitem [{\citenamefont {Helstrom}(1976)}]{helstrom1976quantum}%
  \BibitemOpen
  \bibfield  {author} {\bibinfo {author} {\bibfnamefont {C.~W.}\ \bibnamefont
  {Helstrom}},\ }\href@noop {} {\emph {\bibinfo {title} {{Quantum detection and
  estimation theory}}}}\ (\bibinfo  {publisher} {Academic Press},\ \bibinfo
  {address} {New York},\ \bibinfo {year} {1976})\BibitemShut {NoStop}%
\bibitem [{\citenamefont {Yuen}\ and\ \citenamefont {Lax}(1973)}]{Yuen1973}%
  \BibitemOpen
  \bibfield  {author} {\bibinfo {author} {\bibfnamefont {H.~P.}\ \bibnamefont
  {Yuen}}\ and\ \bibinfo {author} {\bibfnamefont {M.}~\bibnamefont {Lax}},\
  }\href {\doibase 10.1109/TIT.1973.1055103} {\bibfield  {journal} {\bibinfo
  {journal} {IEEE Trans. Inf. Theory}\ }\textbf {\bibinfo {volume} {19}},\
  \bibinfo {pages} {740} (\bibinfo {year} {1973})}\BibitemShut {NoStop}%
\bibitem [{\citenamefont {Belavkin}(1976)}]{Belavkin1976}%
  \BibitemOpen
  \bibfield  {author} {\bibinfo {author} {\bibfnamefont {V.~P.}\ \bibnamefont
  {Belavkin}},\ }\href {\doibase 10.1007/BF01032091} {\bibfield  {journal}
  {\bibinfo  {journal} {Theor. Math. Phys.}\ }\textbf {\bibinfo {volume}
  {26}},\ \bibinfo {pages} {213} (\bibinfo {year} {1976})}\BibitemShut
  {NoStop}%
\bibitem [{\citenamefont {Holevo}(1976)}]{Holevo1976}%
  \BibitemOpen
  \bibfield  {author} {\bibinfo {author} {\bibfnamefont {A.~S.}\ \bibnamefont
  {Holevo}},\ }in\ \href {\doibase 10.1007/BFb0077479} {\emph {\bibinfo
  {booktitle} {Proc. Third Japan — USSR Symp. Probab. Theory}}},\ \bibinfo
  {series} {Lecture Notes in Mathematics}, Vol.\ \bibinfo {volume} {550},\
  \bibinfo {editor} {edited by\ \bibinfo {editor} {\bibfnamefont
  {G.}~\bibnamefont {Maruyama}}\ and\ \bibinfo {editor} {\bibfnamefont {J.~V.}\
  \bibnamefont {Prokhorov}}}\ (\bibinfo  {publisher} {Springer},\ \bibinfo
  {address} {Berlin, Heidelberg},\ \bibinfo {year} {1976})\BibitemShut
  {NoStop}%
\bibitem [{\citenamefont {Holevo}(2011)}]{Holevo2011b}%
  \BibitemOpen
  \bibfield  {author} {\bibinfo {author} {\bibfnamefont {A.~S.}\ \bibnamefont
  {Holevo}},\ }\href {\doibase 10.1007/978-88-7642-378-9} {\emph {\bibinfo
  {title} {{Probabilistic and Statistical Aspects of Quantum Theory}}}},\
  \bibinfo {edition} {2nd}\ ed.\ (\bibinfo  {publisher} {Edizioni della
  Normale},\ \bibinfo {address} {Pisa},\ \bibinfo {year} {2011})\BibitemShut
  {NoStop}%
\bibitem [{\citenamefont {Nagaoka}(1989)}]{Nagaoka1989}%
  \BibitemOpen
  \bibfield  {author} {\bibinfo {author} {\bibfnamefont {H.}~\bibnamefont
  {Nagaoka}},\ }\href@noop {} {\bibfield  {journal} {\bibinfo  {journal} {IEICE
  Tech. Rep.}\ }\textbf {\bibinfo {volume} {IT 89-42}},\ \bibinfo {pages} {9}
  (\bibinfo {year} {1989})}\BibitemShut {NoStop}%
\bibitem [{\citenamefont {Hayashi}\ and\ \citenamefont
  {Matsumoto}(2008)}]{Hayashi2008a}%
  \BibitemOpen
  \bibfield  {author} {\bibinfo {author} {\bibfnamefont {M.}~\bibnamefont
  {Hayashi}}\ and\ \bibinfo {author} {\bibfnamefont {K.}~\bibnamefont
  {Matsumoto}},\ }\href {\doibase 10.1063/1.2988130} {\bibfield  {journal}
  {\bibinfo  {journal} {J. Math. Phys.}\ }\textbf {\bibinfo {volume} {49}},\
  \bibinfo {pages} {102101} (\bibinfo {year} {2008})}\BibitemShut {NoStop}%
\bibitem [{\citenamefont {Yamagata}\ \emph {et~al.}(2013)\citenamefont
  {Yamagata}, \citenamefont {Fujiwara},\ and\ \citenamefont
  {Gill}}]{Yamagata2013}%
  \BibitemOpen
  \bibfield  {author} {\bibinfo {author} {\bibfnamefont {K.}~\bibnamefont
  {Yamagata}}, \bibinfo {author} {\bibfnamefont {A.}~\bibnamefont {Fujiwara}},
  \ and\ \bibinfo {author} {\bibfnamefont {R.~D.}\ \bibnamefont {Gill}},\
  }\href {\doibase 10.1214/13-AOS1147} {\bibfield  {journal} {\bibinfo
  {journal} {Ann. Stat.}\ }\textbf {\bibinfo {volume} {41}},\ \bibinfo {pages}
  {2197} (\bibinfo {year} {2013})}\BibitemShut {NoStop}%
\bibitem [{\citenamefont {Yang}\ \emph
  {et~al.}(2019{\natexlab{b}})\citenamefont {Yang}, \citenamefont
  {Chiribella},\ and\ \citenamefont {Hayashi}}]{Yang2018a}%
  \BibitemOpen
  \bibfield  {author} {\bibinfo {author} {\bibfnamefont {Y.}~\bibnamefont
  {Yang}}, \bibinfo {author} {\bibfnamefont {G.}~\bibnamefont {Chiribella}}, \
  and\ \bibinfo {author} {\bibfnamefont {M.}~\bibnamefont {Hayashi}},\ }\href
  {\doibase 10.1007/s00220-019-03433-4} {\bibfield  {journal} {\bibinfo
  {journal} {Commun. Math. Phys.}\ }\textbf {\bibinfo {volume} {368}},\
  \bibinfo {pages} {223} (\bibinfo {year} {2019}{\natexlab{b}})}\BibitemShut
  {NoStop}%
\bibitem [{\citenamefont {Suzuki}(2016)}]{Suzuki2016a}%
  \BibitemOpen
  \bibfield  {author} {\bibinfo {author} {\bibfnamefont {J.}~\bibnamefont
  {Suzuki}},\ }\href {\doibase 10.1063/1.4945086} {\bibfield  {journal}
  {\bibinfo  {journal} {J. Math. Phys.}\ }\textbf {\bibinfo {volume} {57}},\
  \bibinfo {pages} {042201} (\bibinfo {year} {2016})}\BibitemShut {NoStop}%
\bibitem [{\citenamefont {Bradshaw}\ \emph {et~al.}(2017)\citenamefont
  {Bradshaw}, \citenamefont {Assad},\ and\ \citenamefont
  {Lam}}]{Bradshaw2017a}%
  \BibitemOpen
  \bibfield  {author} {\bibinfo {author} {\bibfnamefont {M.}~\bibnamefont
  {Bradshaw}}, \bibinfo {author} {\bibfnamefont {S.~M.}\ \bibnamefont {Assad}},
  \ and\ \bibinfo {author} {\bibfnamefont {P.~K.}\ \bibnamefont {Lam}},\ }\href
  {\doibase 10.1016/j.physleta.2017.06.024} {\bibfield  {journal} {\bibinfo
  {journal} {Phys. Lett. A}\ }\textbf {\bibinfo {volume} {381}},\ \bibinfo
  {pages} {2598} (\bibinfo {year} {2017})}\BibitemShut {NoStop}%
\bibitem [{\citenamefont {Bradshaw}\ \emph {et~al.}(2018)\citenamefont
  {Bradshaw}, \citenamefont {Lam},\ and\ \citenamefont {Assad}}]{Bradshaw2017}%
  \BibitemOpen
  \bibfield  {author} {\bibinfo {author} {\bibfnamefont {M.}~\bibnamefont
  {Bradshaw}}, \bibinfo {author} {\bibfnamefont {P.~K.}\ \bibnamefont {Lam}}, \
  and\ \bibinfo {author} {\bibfnamefont {S.~M.}\ \bibnamefont {Assad}},\ }\href
  {\doibase 10.1103/PhysRevA.97.012106} {\bibfield  {journal} {\bibinfo
  {journal} {Phys. Rev. A}\ }\textbf {\bibinfo {volume} {97}},\ \bibinfo
  {pages} {012106} (\bibinfo {year} {2018})}\BibitemShut {NoStop}%
\bibitem [{\citenamefont {Suzuki}(2019)}]{Suzuki2018}%
  \BibitemOpen
  \bibfield  {author} {\bibinfo {author} {\bibfnamefont {J.}~\bibnamefont
  {Suzuki}},\ }\href {\doibase 10.3390/e21070703} {\bibfield  {journal}
  {\bibinfo  {journal} {Entropy}\ }\textbf {\bibinfo {volume} {21}},\ \bibinfo
  {pages} {703} (\bibinfo {year} {2019})}\BibitemShut {NoStop}%
\bibitem [{\citenamefont {Sidhu}\ \emph {et~al.}(2019)\citenamefont {Sidhu},
  \citenamefont {Ouyang}, \citenamefont {Campbell},\ and\ \citenamefont
  {Kok}}]{Sidhu2019a}%
  \BibitemOpen
  \bibfield  {author} {\bibinfo {author} {\bibfnamefont {J.~S.}\ \bibnamefont
  {Sidhu}}, \bibinfo {author} {\bibfnamefont {Y.}~\bibnamefont {Ouyang}},
  \bibinfo {author} {\bibfnamefont {E.~T.}\ \bibnamefont {Campbell}}, \ and\
  \bibinfo {author} {\bibfnamefont {P.}~\bibnamefont {Kok}},\ }\href
  {http://arxiv.org/abs/1912.09218} {\bibfield  {journal} {\bibinfo  {journal}
  {arXiv:1912.09218}\ } (\bibinfo {year} {2019})}\BibitemShut {NoStop}%
\bibitem [{\citenamefont {Albarelli}\ \emph
  {et~al.}(2019{\natexlab{b}})\citenamefont {Albarelli}, \citenamefont
  {Friel},\ and\ \citenamefont {Datta}}]{Albarelli2019}%
  \BibitemOpen
  \bibfield  {author} {\bibinfo {author} {\bibfnamefont {F.}~\bibnamefont
  {Albarelli}}, \bibinfo {author} {\bibfnamefont {J.~F.}\ \bibnamefont
  {Friel}}, \ and\ \bibinfo {author} {\bibfnamefont {A.}~\bibnamefont
  {Datta}},\ }\href {\doibase 10.1103/PhysRevLett.123.200503} {\bibfield
  {journal} {\bibinfo  {journal} {Phys. Rev. Lett.}\ }\textbf {\bibinfo
  {volume} {123}},\ \bibinfo {pages} {200503} (\bibinfo {year}
  {2019}{\natexlab{b}})}\BibitemShut {NoStop}%
\bibitem [{\citenamefont {Suzuki}\ \emph {et~al.}(2019)\citenamefont {Suzuki},
  \citenamefont {Yang},\ and\ \citenamefont {Hayashi}}]{Suzuki2019a}%
  \BibitemOpen
  \bibfield  {author} {\bibinfo {author} {\bibfnamefont {J.}~\bibnamefont
  {Suzuki}}, \bibinfo {author} {\bibfnamefont {Y.}~\bibnamefont {Yang}}, \ and\
  \bibinfo {author} {\bibfnamefont {M.}~\bibnamefont {Hayashi}},\ }\href
  {http://arxiv.org/abs/1911.02790} {\bibfield  {journal} {\bibinfo  {journal}
  {arXiv:1911.02790}\ } (\bibinfo {year} {2019})}\BibitemShut {NoStop}%
\bibitem [{\citenamefont {Tsang}(2019)}]{Tsang2019}%
  \BibitemOpen
  \bibfield  {author} {\bibinfo {author} {\bibfnamefont {M.}~\bibnamefont
  {Tsang}},\ }\href {http://arxiv.org/abs/1906.09871} {\bibfield  {journal}
  {\bibinfo  {journal} {arXiv:1906.09871}\ } (\bibinfo {year}
  {2019})}\BibitemShut {NoStop}%
\bibitem [{\citenamefont {Li}\ \emph {et~al.}(2016)\citenamefont {Li},
  \citenamefont {Ferrie}, \citenamefont {Gross}, \citenamefont {Kalev},\ and\
  \citenamefont {Caves}}]{Li2016g}%
  \BibitemOpen
  \bibfield  {author} {\bibinfo {author} {\bibfnamefont {N.}~\bibnamefont
  {Li}}, \bibinfo {author} {\bibfnamefont {C.}~\bibnamefont {Ferrie}}, \bibinfo
  {author} {\bibfnamefont {J.~A.}\ \bibnamefont {Gross}}, \bibinfo {author}
  {\bibfnamefont {A.}~\bibnamefont {Kalev}}, \ and\ \bibinfo {author}
  {\bibfnamefont {C.~M.}\ \bibnamefont {Caves}},\ }\href {\doibase
  10.1103/PhysRevLett.116.180402} {\bibfield  {journal} {\bibinfo  {journal}
  {Phys. Rev. Lett.}\ }\textbf {\bibinfo {volume} {116}},\ \bibinfo {pages}
  {180402} (\bibinfo {year} {2016})}\BibitemShut {NoStop}%
\bibitem [{\citenamefont {Matsumoto}(2002)}]{Matsumoto2002}%
  \BibitemOpen
  \bibfield  {author} {\bibinfo {author} {\bibfnamefont {K.}~\bibnamefont
  {Matsumoto}},\ }\href {\doibase 10.1088/0305-4470/35/13/307} {\bibfield
  {journal} {\bibinfo  {journal} {J. Phys. A}\ }\textbf {\bibinfo {volume}
  {35}},\ \bibinfo {pages} {3111} (\bibinfo {year} {2002})}\BibitemShut
  {NoStop}%
\bibitem [{\citenamefont {Friel}\ \emph {et~al.}()\citenamefont {Friel},
  \citenamefont {Palittapongarnpim}, \citenamefont {Albarelli},\ and\
  \citenamefont {Datta}}]{inprep}%
  \BibitemOpen
  \bibfield  {author} {\bibinfo {author} {\bibfnamefont {J.~F.}\ \bibnamefont
  {Friel}}, \bibinfo {author} {\bibfnamefont {P.}~\bibnamefont
  {Palittapongarnpim}}, \bibinfo {author} {\bibfnamefont {F.}~\bibnamefont
  {Albarelli}}, \ and\ \bibinfo {author} {\bibfnamefont {A.}~\bibnamefont
  {Datta}},\ }\href@noop {} {\bibinfo  {journal} {{In preparation}}\
  }\BibitemShut {NoStop}%
\bibitem [{\citenamefont {Kahn}\ and\ \citenamefont {Guţă}(2009)}]{Kahn2009}%
  \BibitemOpen
\bibfield  {journal} {  }\bibfield  {author} {\bibinfo {author} {\bibfnamefont
  {J.}~\bibnamefont {Kahn}}\ and\ \bibinfo {author} {\bibfnamefont
  {M.}~\bibnamefont {Guţă}},\ }\href {\doibase 10.1007/s00220-009-0787-3}
  {\bibfield  {journal} {\bibinfo  {journal} {Commun. Math. Phys.}\ }\textbf
  {\bibinfo {volume} {289}},\ \bibinfo {pages} {597} (\bibinfo {year}
  {2009})}\BibitemShut {NoStop}%
\bibitem [{\citenamefont {Carollo}\ \emph {et~al.}(2019)\citenamefont
  {Carollo}, \citenamefont {Spagnolo}, \citenamefont {Dubkov},\ and\
  \citenamefont {Valenti}}]{Carollo2019}%
  \BibitemOpen
  \bibfield  {author} {\bibinfo {author} {\bibfnamefont {A.}~\bibnamefont
  {Carollo}}, \bibinfo {author} {\bibfnamefont {B.}~\bibnamefont {Spagnolo}},
  \bibinfo {author} {\bibfnamefont {A.~A.}\ \bibnamefont {Dubkov}}, \ and\
  \bibinfo {author} {\bibfnamefont {D.}~\bibnamefont {Valenti}},\ }\href
  {\doibase 10.1088/1742-5468/ab3ccb} {\bibfield  {journal} {\bibinfo
  {journal} {J. Stat. Mech. Theory Exp.}\ }\textbf {\bibinfo {volume} {2019}},\
  \bibinfo {pages} {094010} (\bibinfo {year} {2019})}\BibitemShut {NoStop}%
\bibitem [{\citenamefont {Hayashi}(2017)}]{Hayashi2017c}%
  \BibitemOpen
  \bibfield  {author} {\bibinfo {author} {\bibfnamefont {M.}~\bibnamefont
  {Hayashi}},\ }\href {\doibase 10.1007/978-3-662-49725-8} {\emph {\bibinfo
  {title} {{Quantum Information Theory}}}}\ (\bibinfo  {publisher} {Springer},\
  \bibinfo {address} {Berlin, Heidelberg},\ \bibinfo {year} {2017})\BibitemShut
  {NoStop}%
\bibitem [{\citenamefont {Seveso}\ \emph {et~al.}(2019)\citenamefont {Seveso},
  \citenamefont {Albarelli}, \citenamefont {Genoni},\ and\ \citenamefont
  {Paris}}]{Seveso2019}%
  \BibitemOpen
  \bibfield  {author} {\bibinfo {author} {\bibfnamefont {L.}~\bibnamefont
  {Seveso}}, \bibinfo {author} {\bibfnamefont {F.}~\bibnamefont {Albarelli}},
  \bibinfo {author} {\bibfnamefont {M.~G.}\ \bibnamefont {Genoni}}, \ and\
  \bibinfo {author} {\bibfnamefont {M.~G.~A.}\ \bibnamefont {Paris}},\ }\href
  {\doibase 10.1088/1751-8121/ab599b} {\bibfield  {journal} {\bibinfo
  {journal} {J. Phys. A}\ }\textbf {\bibinfo {volume} {in press}},\ \bibinfo
  {pages} {{}} (\bibinfo {year} {2019})}\BibitemShut {NoStop}%
\bibitem [{\citenamefont {Heinosaari}\ and\ \citenamefont
  {Ziman}(2011)}]{Heinosaari2011a}%
  \BibitemOpen
  \bibfield  {author} {\bibinfo {author} {\bibfnamefont {T.}~\bibnamefont
  {Heinosaari}}\ and\ \bibinfo {author} {\bibfnamefont {M.}~\bibnamefont
  {Ziman}},\ }\href {\doibase 10.1017/CBO9781139031103} {\emph {\bibinfo
  {title} {{The Mathematical language of Quantum Theory}}}}\ (\bibinfo
  {publisher} {Cambridge University Press},\ \bibinfo {address} {Cambridge},\
  \bibinfo {year} {2011})\BibitemShut {NoStop}%
\bibitem [{\citenamefont {Lehmann}\ and\ \citenamefont
  {Casella}(1998)}]{lehmann_theory_1998}%
  \BibitemOpen
  \bibfield  {author} {\bibinfo {author} {\bibfnamefont {E.~L.}\ \bibnamefont
  {Lehmann}}\ and\ \bibinfo {author} {\bibfnamefont {G.}~\bibnamefont
  {Casella}},\ }\href@noop {} {\emph {\bibinfo {title} {{Theory of point
  estimation}}}},\ \bibinfo {edition} {2nd}\ ed.,\ Springer texts in
  statistics\ (\bibinfo  {publisher} {Springer},\ \bibinfo {address} {New
  York},\ \bibinfo {year} {1998})\BibitemShut {NoStop}%
\bibitem [{\citenamefont {Tsiatis}(2006)}]{tsiatis06}%
  \BibitemOpen
  \bibfield  {author} {\bibinfo {author} {\bibfnamefont {A.}~\bibnamefont
  {Tsiatis}},\ }\href {\doibase 10.1007/0-387-37345-4} {\emph {\bibinfo {title}
  {{Semiparametric Theory and Missing Data}}}},\ Springer Series in Statistics\
  (\bibinfo  {publisher} {Springer},\ \bibinfo {address} {New York},\ \bibinfo
  {year} {2006})\BibitemShut {NoStop}%
\bibitem [{\citenamefont {Stoica}\ and\ \citenamefont
  {Marzetta}(2001)}]{Stoica2001}%
  \BibitemOpen
  \bibfield  {author} {\bibinfo {author} {\bibfnamefont {P.}~\bibnamefont
  {Stoica}}\ and\ \bibinfo {author} {\bibfnamefont {T.~L.}\ \bibnamefont
  {Marzetta}},\ }\href {\doibase 10.1109/78.890346} {\bibfield  {journal}
  {\bibinfo  {journal} {IEEE Trans. Signal Process.}\ }\textbf {\bibinfo
  {volume} {49}},\ \bibinfo {pages} {87} (\bibinfo {year} {2001})}\BibitemShut
  {NoStop}%
\bibitem [{\citenamefont {Hayashi}(2005)}]{Hayashi2005}%
  \BibitemOpen
  \bibfield  {author} {\bibinfo {author} {\bibfnamefont {M.}~\bibnamefont
  {Hayashi}},\ }\href {\doibase 10.1142/5630} {\emph {\bibinfo {title}
  {{Asymptotic Theory of Quantum Statistical Inference}}}},\ edited by\
  \bibinfo {editor} {\bibfnamefont {M.}~\bibnamefont {Hayashi}}\ (\bibinfo
  {publisher} {World Scientific},\ \bibinfo {address} {Singapore},\ \bibinfo
  {year} {2005})\BibitemShut {NoStop}%
\bibitem [{\citenamefont {Ragy}\ \emph {et~al.}(2016)\citenamefont {Ragy},
  \citenamefont {Jarzyna},\ and\ \citenamefont
  {Demkowicz-Dobrza{\'{n}}ski}}]{Ragy2016}%
  \BibitemOpen
  \bibfield  {author} {\bibinfo {author} {\bibfnamefont {S.}~\bibnamefont
  {Ragy}}, \bibinfo {author} {\bibfnamefont {M.}~\bibnamefont {Jarzyna}}, \
  and\ \bibinfo {author} {\bibfnamefont {R.}~\bibnamefont
  {Demkowicz-Dobrza{\'{n}}ski}},\ }\href {\doibase 10.1103/PhysRevA.94.052108}
  {\bibfield  {journal} {\bibinfo  {journal} {Phys. Rev. A}\ }\textbf {\bibinfo
  {volume} {94}},\ \bibinfo {pages} {052108} (\bibinfo {year}
  {2016})}\BibitemShut {NoStop}%
\bibitem [{\citenamefont {Belavkin}\ and\ \citenamefont
  {Grishanin}(1973)}]{Belavkin1973}%
  \BibitemOpen
  \bibfield  {author} {\bibinfo {author} {\bibfnamefont {V.~P.}\ \bibnamefont
  {Belavkin}}\ and\ \bibinfo {author} {\bibfnamefont {B.~A.}\ \bibnamefont
  {Grishanin}},\ }\href
  {https://www.maths.nottingham.ac.uk/plp/vpb/publications/Belavkin&Grishanin.pdf}
  {\bibfield  {journal} {\bibinfo  {journal} {Probl. Peredachi Inf.}\ }\textbf
  {\bibinfo {volume} {9}},\ \bibinfo {pages} {44} (\bibinfo {year}
  {1973})}\BibitemShut {NoStop}%
\bibitem [{\citenamefont {Gill}\ and\ \citenamefont {Massar}(2000)}]{Gill2000}%
  \BibitemOpen
  \bibfield  {author} {\bibinfo {author} {\bibfnamefont {R.~D.}\ \bibnamefont
  {Gill}}\ and\ \bibinfo {author} {\bibfnamefont {S.}~\bibnamefont {Massar}},\
  }\href {\doibase 10.1103/PhysRevA.61.042312} {\bibfield  {journal} {\bibinfo
  {journal} {Phys. Rev. A}\ }\textbf {\bibinfo {volume} {61}},\ \bibinfo
  {pages} {042312} (\bibinfo {year} {2000})}\BibitemShut {NoStop}%
\bibitem [{\citenamefont {Serafini}(2017)}]{serafini2017quantum}%
  \BibitemOpen
  \bibfield  {author} {\bibinfo {author} {\bibfnamefont {A.}~\bibnamefont
  {Serafini}},\ }\href
  {https://www.crcpress.com/Quantum-Continuous-Variables-A-Primer-of-Theoretical-Methods/Serafini/p/book/9781482246346}
  {\emph {\bibinfo {title} {{Quantum continuous variables : a primer of
  theoretical methods}}}}\ (\bibinfo  {publisher} {CRC Press},\ \bibinfo
  {address} {Boca Raton},\ \bibinfo {year} {2017})\BibitemShut {NoStop}%
\bibitem [{\citenamefont {Genoni}\ \emph {et~al.}(2016)\citenamefont {Genoni},
  \citenamefont {Lami},\ and\ \citenamefont {Serafini}}]{Genoni2016}%
  \BibitemOpen
  \bibfield  {author} {\bibinfo {author} {\bibfnamefont {M.~G.}\ \bibnamefont
  {Genoni}}, \bibinfo {author} {\bibfnamefont {L.}~\bibnamefont {Lami}}, \ and\
  \bibinfo {author} {\bibfnamefont {A.}~\bibnamefont {Serafini}},\ }\href
  {\doibase 10.1080/00107514.2015.1125624} {\bibfield  {journal} {\bibinfo
  {journal} {Contemp. Phys.}\ }\textbf {\bibinfo {volume} {57}},\ \bibinfo
  {pages} {331} (\bibinfo {year} {2016})}\BibitemShut {NoStop}%
\bibitem [{\citenamefont {Monras}(2013)}]{Monras2013}%
  \BibitemOpen
  \bibfield  {author} {\bibinfo {author} {\bibfnamefont {A.}~\bibnamefont
  {Monras}},\ }\href {http://arxiv.org/abs/1303.3682} {\bibfield  {journal}
  {\bibinfo  {journal} {arXiv:1303.3682}\ } (\bibinfo {year}
  {2013})}\BibitemShut {NoStop}%
\bibitem [{\citenamefont {Arthurs}\ and\ \citenamefont
  {Kelly}(1965)}]{Arthurs1965}%
  \BibitemOpen
  \bibfield  {author} {\bibinfo {author} {\bibfnamefont {E.}~\bibnamefont
  {Arthurs}}\ and\ \bibinfo {author} {\bibfnamefont {J.~L.}\ \bibnamefont
  {Kelly}},\ }\href {\doibase 10.1002/j.1538-7305.1965.tb01684.x} {\bibfield
  {journal} {\bibinfo  {journal} {Bell Syst. Tech. J.}\ }\textbf {\bibinfo
  {volume} {44}},\ \bibinfo {pages} {725} (\bibinfo {year} {1965})}\BibitemShut
  {NoStop}%
\bibitem [{\citenamefont {Stenholm}(1992)}]{Stenholm1992}%
  \BibitemOpen
  \bibfield  {author} {\bibinfo {author} {\bibfnamefont {S.}~\bibnamefont
  {Stenholm}},\ }\href {\doibase 10.1016/0003-4916(92)90086-2} {\bibfield
  {journal} {\bibinfo  {journal} {Ann. Phys. (N. Y).}\ }\textbf {\bibinfo
  {volume} {218}},\ \bibinfo {pages} {233} (\bibinfo {year}
  {1992})}\BibitemShut {NoStop}%
\bibitem [{\citenamefont {Raymer}(1994)}]{Raymer1994}%
  \BibitemOpen
  \bibfield  {author} {\bibinfo {author} {\bibfnamefont {M.~G.}\ \bibnamefont
  {Raymer}},\ }\href {\doibase 10.1119/1.17657} {\bibfield  {journal} {\bibinfo
   {journal} {Am. J. Phys.}\ }\textbf {\bibinfo {volume} {62}},\ \bibinfo
  {pages} {986} (\bibinfo {year} {1994})}\BibitemShut {NoStop}%
\bibitem [{\citenamefont {Hayashi}(2003)}]{Hayashi2006a}%
  \BibitemOpen
  \bibfield  {author} {\bibinfo {author} {\bibfnamefont {M.}~\bibnamefont
  {Hayashi}},\ }\href {http://arxiv.org/abs/quant-ph/0608198} {\bibfield
  {journal} {\bibinfo  {journal} {Bull. Math. Soc. Japan}\ }\textbf {\bibinfo
  {volume} {55}},\ \bibinfo {pages} {368} (\bibinfo {year} {2003})}\BibitemShut
  {NoStop}%
\bibitem [{\citenamefont {Guţă}\ and\ \citenamefont
  {Jen{\v{c}}ov{\'{a}}}(2007)}]{Guta2007}%
  \BibitemOpen
  \bibfield  {author} {\bibinfo {author} {\bibfnamefont {M.}~\bibnamefont
  {Guţă}}\ and\ \bibinfo {author} {\bibfnamefont {A.}~\bibnamefont
  {Jen{\v{c}}ov{\'{a}}}},\ }\href {\doibase 10.1007/s00220-007-0340-1}
  {\bibfield  {journal} {\bibinfo  {journal} {Commun. Math. Phys.}\ }\textbf
  {\bibinfo {volume} {276}},\ \bibinfo {pages} {341} (\bibinfo {year}
  {2007})}\BibitemShut {NoStop}%
\bibitem [{\citenamefont {Fujiwara}\ and\ \citenamefont
  {Yamagata}(2018)}]{Fujiwara2018}%
  \BibitemOpen
  \bibfield  {author} {\bibinfo {author} {\bibfnamefont {A.}~\bibnamefont
  {Fujiwara}}\ and\ \bibinfo {author} {\bibfnamefont {K.}~\bibnamefont
  {Yamagata}},\ }\href {http://arxiv.org/abs/1804.03510} {\bibfield  {journal}
  {\bibinfo  {journal} {arXiv:1804.03510}\ } (\bibinfo {year}
  {2018})}\BibitemShut {NoStop}%
\bibitem [{\citenamefont {Banerjee}\ \emph {et~al.}(1973)\citenamefont
  {Banerjee}, \citenamefont {Rao},\ and\ \citenamefont {Mitra}}]{Banerjee1973}%
  \BibitemOpen
  \bibfield  {author} {\bibinfo {author} {\bibfnamefont {K.~S.}\ \bibnamefont
  {Banerjee}}, \bibinfo {author} {\bibfnamefont {C.~R.}\ \bibnamefont {Rao}}, \
  and\ \bibinfo {author} {\bibfnamefont {S.~K.}\ \bibnamefont {Mitra}},\ }\href
  {\doibase 10.2307/1266840} {\bibfield  {journal} {\bibinfo  {journal}
  {Technometrics}\ }\textbf {\bibinfo {volume} {15}},\ \bibinfo {pages} {197}
  (\bibinfo {year} {1973})}\BibitemShut {NoStop}%
\bibitem [{\citenamefont {Bickel}\ \emph {et~al.}(1993)\citenamefont {Bickel},
  \citenamefont {Klaassen}, \citenamefont {Ritov},\ and\ \citenamefont
  {Wellner}}]{bickel93}%
  \BibitemOpen
  \bibfield  {author} {\bibinfo {author} {\bibfnamefont {P.~J.}\ \bibnamefont
  {Bickel}}, \bibinfo {author} {\bibfnamefont {C.~A.~J.}\ \bibnamefont
  {Klaassen}}, \bibinfo {author} {\bibfnamefont {Y.}~\bibnamefont {Ritov}}, \
  and\ \bibinfo {author} {\bibfnamefont {J.~A.}\ \bibnamefont {Wellner}},\
  }\href@noop {} {\emph {\bibinfo {title} {{Efficient and Adaptive Estimation
  for Semiparametric Models}}}}\ (\bibinfo  {publisher} {Springer},\ \bibinfo
  {address} {New York},\ \bibinfo {year} {1993})\BibitemShut {NoStop}%
\bibitem [{\citenamefont {Luenberger}(1997)}]{luenberger1997optimization}%
  \BibitemOpen
  \bibfield  {author} {\bibinfo {author} {\bibfnamefont {D.~G.}\ \bibnamefont
  {Luenberger}},\ }\href {https://books.google.pl/books?id=lZU0CAH4RccC} {\emph
  {\bibinfo {title} {{Optimization by Vector Space Methods}}}}\ (\bibinfo
  {publisher} {Wiley},\ \bibinfo {year} {1997})\BibitemShut {NoStop}%
\bibitem [{\citenamefont {Genoni}\ \emph {et~al.}(2013)\citenamefont {Genoni},
  \citenamefont {Paris}, \citenamefont {Adesso}, \citenamefont {Nha},
  \citenamefont {Knight},\ and\ \citenamefont {Kim}}]{Genoni2013b}%
  \BibitemOpen
  \bibfield  {author} {\bibinfo {author} {\bibfnamefont {M.~G.}\ \bibnamefont
  {Genoni}}, \bibinfo {author} {\bibfnamefont {M.~G.~A.}\ \bibnamefont
  {Paris}}, \bibinfo {author} {\bibfnamefont {G.}~\bibnamefont {Adesso}},
  \bibinfo {author} {\bibfnamefont {H.}~\bibnamefont {Nha}}, \bibinfo {author}
  {\bibfnamefont {P.~L.}\ \bibnamefont {Knight}}, \ and\ \bibinfo {author}
  {\bibfnamefont {M.~S.}\ \bibnamefont {Kim}},\ }\href {\doibase
  10.1103/PhysRevA.87.012107} {\bibfield  {journal} {\bibinfo  {journal} {Phys.
  Rev. A}\ }\textbf {\bibinfo {volume} {87}},\ \bibinfo {pages} {012107}
  (\bibinfo {year} {2013})}\BibitemShut {NoStop}%
\bibitem [{\citenamefont {Ragy}\ \emph {et~al.}(2019)\citenamefont {Ragy},
  \citenamefont {Jarzyna},\ and\ \citenamefont
  {Demkowicz-Dobrza{\'{n}}ski}}]{Ragy2019}%
  \BibitemOpen
  \bibfield  {author} {\bibinfo {author} {\bibfnamefont {S.}~\bibnamefont
  {Ragy}}, \bibinfo {author} {\bibfnamefont {M.}~\bibnamefont {Jarzyna}}, \
  and\ \bibinfo {author} {\bibfnamefont {R.}~\bibnamefont
  {Demkowicz-Dobrza{\'{n}}ski}},\ }\href {\doibase 10.1103/PhysRevA.99.029905}
  {\bibfield  {journal} {\bibinfo  {journal} {Phys. Rev. A}\ }\textbf {\bibinfo
  {volume} {99}},\ \bibinfo {pages} {029905} (\bibinfo {year}
  {2019})}\BibitemShut {NoStop}%
\bibitem [{\citenamefont {Bhatia}(1997)}]{Bhatia1997}%
  \BibitemOpen
  \bibfield  {author} {\bibinfo {author} {\bibfnamefont {R.}~\bibnamefont
  {Bhatia}},\ }\href@noop {} {\emph {\bibinfo {title} {{Matrix Analysis}}}}\
  (\bibinfo  {publisher} {Springer},\ \bibinfo {address} {New York},\ \bibinfo
  {year} {1997})\BibitemShut {NoStop}%
\end{thebibliography}%
\end{document}